\author{Martin Radloff}
\renewcommand{\theequation}{\thesection\arabic{equation}}
\newfont{\suet}{suet14}
\newfont{\schwell}{schwell}
\DeclareTextFontCommand{\textsuet}{\suet}
\DeclareTextFontCommand{\textschwell}{\schwell}	
\newcommand{\BIGOP}[1]{\mathop{\mathchoice%
{\raise-0.22em\hbox{\huge $#1$}}%
{\raise-0.05em\hbox{\Large $#1$}}{\hbox{\large $#1$}}{#1}}}
\newcommand{\BIGboxplus}{\mathop{\mathchoice%
{\raise-0.35em\hbox{\huge $\boxplus$}}%
{\raise-0.15em\hbox{\Large $\boxplus$}}{\hbox{\large $\boxplus$}}{\boxplus}}}
\newtheorem{theorem}{Theorem}
\newtheorem{lemma}{Lemma}
\theoremstyle{definition}
\newtheorem{example}{Example}
\newtheorem{remark}{Remark}
\newtheorem{notation}{Notation}
\newcommand{\D}{\mathrm{d}}
\newcommand{\id}{\operatorname{id}}
\begin{document}
	\pagestyle{Kopfi}
	\thispagestyle{empty}
	\vspace*{1.3cm}
	\begin{center}
		{\Large\textbf{Locally $D$-optimal Designs for Non-linear Models on the $k$-dimensional Ball}}
	\end{center}
	\centerline{Martin Radloff\footnote[2]{corresponding author: Martin Radloff, Institute for Mathematical Stochastics, Otto-von-Guericke-University, PF~4120, 39016~Magdeburg, Germany, \url{martin.radloff@ovgu.de}} and Rainer Schwabe\footnote[3]{Rainer Schwabe, Institute for Mathematical Stochastics, Otto-von-Guericke-University, PF~4120, 39016~Magdeburg, Germany, \url{rainer.schwabe@ovgu.de}}} 
		
	\vspace*{1.3cm}

%

\begin{quotation}
\noindent {\textit{Abstract:}}
In this paper we construct (locally) $D$-optimal designs for a wide class of non-linear multiple regression models, when the design region is a $k$-di\-men\-sion\-al ball. For this construction we make use of the concept of invariance and equivariance in the context of optimal designs. As examples we consider Poisson and negative binomial regression as well as proportional hazard models with censoring. By generalisation we can extend these results to arbitrary ellipsoids.

\vspace{9pt}
\noindent {\textit{Key words and phrases:}}
Censored data, $D$-optimality, generalized linear models, $k$-dimensional ball, multiple regression models, negative binomial regression, Poisson regression.
\par
\end{quotation}\par

\def\thefigure{\arabic{figure}}
\def\thetable{\arabic{table}}

\renewcommand{\theequation}{\thesection.\arabic{equation}}

\fontsize{12}{14pt plus.8pt minus .6pt}\selectfont

\setcounter{equation}{0} 
\section{Introduction}
\label{sec:1}

	To find an optimal design, that means to find an optimal setting of control variables, of a special class of linear and non-linear models with respect to the $D$-criterion we will use results for equivariance and invariance in \cite{Radloff:2016}. So it is possible to reduce this multidimensional problem to a one-dimensional marginal problem. This marginal issue was investigated, for example in \cite{Konstantinou:2014}.
	The corresponding result for the linear case is well-known, see, for example, in \citet[Section 15.12]{Pukelsheim:1993}, and will be revisited in Section~\ref{sec:linMod}.\\	
	\cite{Schmidt:2017} considered the same class of models with $k$ covariates, but on a $k$-dimensional cuboid. They found a way to divide this problem into $k$ marginal sub-problems with only one covariate in the form like \cite{Konstantinou:2014}.\\	
	Our main result for non-linear models can be found in Section~\ref{sec:nonlinMod}. \\	
  In Section~\ref{sec3} we will discuss some examples. In the case of Poisson regression we will get a concrete formula to determine such an optimal design. In the case of negative binomial regression and censoring data models some computational efforts are needed.\\
	In the final Section~\ref{sec4} we have a short look at the generalisation to an ellipsoidal design region.

\setcounter{equation}{0} 
\section{General Model Description, Information, and Design}
\label{sec:2}

	In the following sections we want to focus on a class of (non-linear) multiple regression models. Here every observation $Y$ depends on a special setting of control variables, a so-called design point $\boldsymbol{x}$. Each design point $\boldsymbol{x}$ is in the design region $\mathscr{X}=\mathbb{B}_k=\linebreak[2]\{\boldsymbol{x}\in\mathbb{R}^k\ :\ x_1^2+\ldots+x_k^2\leq 1\}$, the $k$-dimensional unit ball, $k\in\mathbb{N}$. The regression function $\boldsymbol{f}:\mathscr{X}\to\mathbb{R}^{k+1}$ is considered to be $\boldsymbol{x}\mapsto(1,x_1,\ldots,x_k)^\top$, and the parameter vector $\boldsymbol{\beta}=(\beta_0,\beta_1,\ldots,\beta_k)^\top$ is unknown and lies in the parameter space $\mathscr{B}$ which is assumed to be rotation invariant with respect to $\beta_1,\ldots,\beta_k$. We will only consider $\mathscr{B}=\mathbb{R}^{k+1}$. 
	And therefore the linear predictor is
	\[\boldsymbol{f}(\boldsymbol{x})^\top\boldsymbol{\beta} = \beta_0 + \beta_1 x_1 + \ldots + \beta_k x_k\ .\] 	
	A second requirement is that the one-support-point (or elemental -- as it is called, for example, in \cite{Atkinson:2014}) information matrix $\boldsymbol{M}(\boldsymbol{x},\boldsymbol{\beta})$ can be written in the form
	\begin{equation*}
		\boldsymbol{M}(\boldsymbol{x},\boldsymbol{\beta})=\lambda\!\left(\boldsymbol{f}(\boldsymbol{x})^\top\boldsymbol{\beta}\right)\boldsymbol{f}(\boldsymbol{x})\boldsymbol{f}(\boldsymbol{x})^\top
	\end{equation*}
	with an intensity (or efficiency) function $\lambda$ (see \citet[Section~1.5]{Fedorov:1972}) which only depends on the value of the linear predictor.
	
	\noindent In generalised linear models ({see \cite{McCulloch:2001}}) or for example in censored data models ({see \cite{Schmidt:2017}}) this prerequisite is fulfilled.\bigskip
	
	Now we want to find optimal designs on the the $k$-dimensional unit ball for those problems. This will be done in the sense of $D$-optimality, which is the most popular criterion and optimises the volume of the (asymptotic) confidence ellipsoid.\\
	For that account we need the concept of information matrices. In our case the information matrix of a (generalised) design $\xi$ with independent observations is
	\begin{equation*}
		\boldsymbol{M}(\xi,\boldsymbol{\beta})=\int_\mathscr{X}\boldsymbol{M}(\boldsymbol{x},\boldsymbol{\beta})\ \xi(\D \boldsymbol{x})=\int_\mathscr{X}\lambda\!\left(\boldsymbol{f}(\boldsymbol{x})^\top\boldsymbol{\beta}\right)\boldsymbol{f}(\boldsymbol{x})\boldsymbol{f}(\boldsymbol{x})^\top \xi(\D \boldsymbol{x})\ .
	\end{equation*}	
	Here generalised design does not only mean design on a discrete set of design points. It means an arbitrary probability measure on the design region, see, for example, \citet[Section~3.1]{Silvey:1980}.\\ 	
	So we can define: A design $\xi^\ast$ with regular information matrix $\boldsymbol{M}(\xi^\ast,\boldsymbol{\beta})$ is called (locally) $D$-optimal (at $\boldsymbol{\beta}$) if $\det(\boldsymbol{M}(\xi^\ast,\boldsymbol{\beta}))\geq\det(\boldsymbol{M}(\xi,\boldsymbol{\beta}))$ holds for all possible probability measures $\xi$ on $\mathscr{X}$. \bigskip
	
\begin{notation}
	While $\mathbb{S}_{d-1}$, $d\in\{2,3,4,\ldots\}$, describes the unit sphere, which is the surface of a $d$-dimensional unit ball $\mathbb{B}_d$, the symbol $\mathbb{S}_{d-1}(r)$ denotes the sphere with radius $r$, which is the surface of the $d$-dimensional ball $\mathbb{B}_d(r)$ with radius $r$.\\	
	Introducing notations we have to mention $\mathbb{O}_d$ the $d$-dimensional zero-vector, $\mathbb{O}_{d_1\times d_2}$ the $(d_1\times d_2)$-dimensional zero-matrix, $\mathds{1}_d$ the $d$-dimensional one-vector, $\mathbb{I}_d$ the $(d\times d)$-di\-men\-sion\-al identity matrix and $\id$ the identity function.\hfill$\Diamond$ 
\end{notation}

\setcounter{equation}{0} 
\section{Linear Model}
\label{sec:linMod}
We first start with linear models which are well-investigated in the literature. Here the intensity (or efficiency) function $\lambda$ is constant 1 and does not depend on the parameter $\boldsymbol{\beta}$. Hence, the information matrix and the $D$-optimal design is independent from $\boldsymbol{\beta}$. In \citet[Section 15.12]{Pukelsheim:1993}, for example, the following a little bit adapted result can be found:

\begin{theorem}\label{theorem:lin1}
	In the linear case with regression function \[\boldsymbol{f}(\boldsymbol{x})=(1,x_1,\ldots,x_k)^\top\] the vertices of an arbitrarily rotated $k$-dimensional regular simplex, whose vertices lie on the surface of the design region $\mathbb{S}_{k-1}$, constitutes a $D$-optimal design on the unit ball $\mathbb{B}_k$. 
	The corresponding information matrix is the diagonal matrix 
	\begin{align*}
		\mathrm{diag}(1,\tfrac{1}{k},\ldots,\tfrac{1}{k})\ .
	\end{align*}
\end{theorem}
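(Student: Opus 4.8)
The plan is to exploit the block structure of the information matrix and thereby reduce the $D$-criterion to maximising a determinant under a trace constraint. Since $\lambda\equiv 1$ in the linear case, the information matrix of a design $\xi$ is $\boldsymbol{M}(\xi)=\int_{\mathbb{B}_k}\boldsymbol{f}(\boldsymbol{x})\boldsymbol{f}(\boldsymbol{x})^\top\,\xi(\D\boldsymbol{x})$. Writing $\boldsymbol{c}=\int\boldsymbol{x}\,\xi(\D\boldsymbol{x})$ for the vector of first moments and $\boldsymbol{A}=\int\boldsymbol{x}\boldsymbol{x}^\top\,\xi(\D\boldsymbol{x})$ for the matrix of second moments, and using that the leading entry equals $\int 1\,\xi(\D\boldsymbol{x})=1$ because $\xi$ is a probability measure, I would record the partitioned form \[\boldsymbol{M}(\xi)=\begin{pmatrix}1&\boldsymbol{c}^\top\\\boldsymbol{c}&\boldsymbol{A}\end{pmatrix}.\] A Schur-complement expansion with respect to the leading $1$ then gives $\det(\boldsymbol{M}(\xi))=\det(\boldsymbol{A}-\boldsymbol{c}\boldsymbol{c}^\top)$, i.e.\ the determinant of the covariance matrix of $\boldsymbol{x}$ under $\xi$.

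Next I would establish an upper bound valid for every design. As $\boldsymbol{c}\boldsymbol{c}^\top$ is nonnegative definite, one has $\boldsymbol{A}-\boldsymbol{c}\boldsymbol{c}^\top\preceq\boldsymbol{A}$ in the Loewner ordering, and monotonicity of the determinant on nonnegative definite matrices yields $\det(\boldsymbol{A}-\boldsymbol{c}\boldsymbol{c}^\top)\leq\det(\boldsymbol{A})$, with equality precisely when $\boldsymbol{c}=\mathbb{O}_k$. Because every $\boldsymbol{x}\in\mathbb{B}_k$ satisfies $x_1^2+\ldots+x_k^2\leq 1$, the trace is controlled by $\operatorname{tr}(\boldsymbol{A})=\int\|\boldsymbol{x}\|^2\,\xi(\D\boldsymbol{x})\leq 1$. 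Applying the arithmetic--geometric mean inequality to the nonnegative eigenvalues of $\boldsymbol{A}$ then gives \[\det(\boldsymbol{A})\leq\left(\frac{\operatorname{tr}(\boldsymbol{A})}{k}\right)^{\!k}\leq\frac{1}{k^k},\] so that $\det(\boldsymbol{M}(\xi))\leq k^{-k}$ throughout. Tracing back the equality cases, the bound can only be met when $\boldsymbol{c}=\mathbb{O}_k$, when $\xi$ is supported on the sphere $\mathbb{S}_{k-1}$ so that $\operatorname{tr}(\boldsymbol{A})=1$, and when $\boldsymbol{A}=\tfrac{1}{k}\mathbb{I}_k$.

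Finally I would verify that the uniform measure $\xi^\ast$ placing weight $1/(k+1)$ on the $k+1$ vertices $\boldsymbol{v}_0,\ldots,\boldsymbol{v}_k$ of a regular simplex inscribed in $\mathbb{S}_{k-1}$ realises these equality conditions. The defining geometric properties are $\|\boldsymbol{v}_i\|=1$, $\sum_{i=0}^{k}\boldsymbol{v}_i=\mathbb{O}_k$, and $\boldsymbol{v}_i^\top\boldsymbol{v}_j=-1/k$ for $i\neq j$; the first two immediately deliver support on the sphere and $\boldsymbol{c}=\mathbb{O}_k$. For the second-moment matrix $\boldsymbol{A}=\frac{1}{k+1}\sum_{i=0}^{k}\boldsymbol{v}_i\boldsymbol{v}_i^\top$ I would show $\boldsymbol{A}=\tfrac{1}{k}\mathbb{I}_k$, which then produces $\boldsymbol{M}(\xi^\ast)=\mathrm{diag}(1,\tfrac{1}{k},\ldots,\tfrac{1}{k})$ with determinant $k^{-k}$, matching the bound and proving optimality; since applying a common orthogonal matrix to all vertices preserves every norm and inner product, each rotated simplex is equally optimal.

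I expect the identity $\boldsymbol{A}=\tfrac{1}{k}\mathbb{I}_k$ to be the one step needing genuine care. The cleanest route is a symmetry argument: the symmetry group of the regular simplex acts irreducibly on $\mathbb{R}^k$ and fixes $\boldsymbol{A}$, so $\boldsymbol{A}$ must be a scalar multiple of $\mathbb{I}_k$, and taking traces forces the scalar to be $1/k$. Alternatively one can compute directly through the Gram matrix $\boldsymbol{G}=(\boldsymbol{v}_i^\top\boldsymbol{v}_j)_{i,j}=\tfrac{k+1}{k}\mathbb{I}_{k+1}-\tfrac{1}{k}\mathds{1}_{k+1}\mathds{1}_{k+1}^\top$: it carries eigenvalue $0$ on $\mathds{1}_{k+1}$ and eigenvalue $(k+1)/k$ on its orthogonal complement, so $(k+1)\boldsymbol{A}=\sum_{i=0}^{k}\boldsymbol{v}_i\boldsymbol{v}_i^\top$, sharing the nonzero spectrum of $\boldsymbol{G}$, equals $\tfrac{k+1}{k}\mathbb{I}_k$, giving $\boldsymbol{A}=\tfrac{1}{k}\mathbb{I}_k$ at once.
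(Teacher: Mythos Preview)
Your argument is correct. Note, however, that the paper does not actually supply a proof of this theorem: it is stated as a known result and attributed to \citet[Section~15.12]{Pukelsheim:1993}, so there is no in-paper proof to compare against. What the paper \emph{does} prove is the companion Lemma~\ref{lem:lin1}, that the uniform distribution on $\mathbb{S}_{k-1}$ yields the same information matrix $\mathrm{diag}(1,\tfrac{1}{k},\ldots,\tfrac{1}{k})$, via a short symmetry/moment calculation.

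Your route---Schur complement to reduce $\det \boldsymbol{M}(\xi)$ to the determinant of the covariance matrix, then the trace bound $\operatorname{tr}(\boldsymbol{A})\le 1$ combined with AM--GM on the eigenvalues---is a clean, self-contained derivation of both the optimality and the explicit information matrix, and it does not rely on the Kiefer--Wolfowitz equivalence theorem or on the invariance machinery that Pukelsheim's treatment typically invokes. The verification that the regular simplex meets all three equality conditions ($\boldsymbol{c}=\mathbb{O}_k$, support on $\mathbb{S}_{k-1}$, $\boldsymbol{A}=\tfrac{1}{k}\mathbb{I}_k$) is handled correctly; either of your two arguments for $\boldsymbol{A}=\tfrac{1}{k}\mathbb{I}_k$ (irreducibility of the simplex symmetry group, or the Gram-matrix spectrum) is sound. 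One small caveat: the claim that equality in $\det(\boldsymbol{A}-\boldsymbol{c}\boldsymbol{c}^\top)\le\det(\boldsymbol{A})$ holds \emph{precisely} when $\boldsymbol{c}=\mathbb{O}_k$ tacitly assumes $\boldsymbol{A}$ is nonsingular; this is harmless here since any design attaining the bound $k^{-k}$ necessarily has $\det(\boldsymbol{A})>0$, but you may wish to phrase that step as a one-sided inequality only.
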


Here ``regular'' means that all edges of the simplex have the same length.

\begin{lemma}\label{lem:lin1}
	The (continuously) uniform design 
	on $\mathbb{S}_{k-1}$ has the same information matrix.
\end{lemma}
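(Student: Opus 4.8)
The plan is to compute the information matrix of the uniform design $\bar\xi$ on $\mathbb{S}_{k-1}$ directly and to check, entry by entry, that it equals the matrix $\mathrm{diag}(1,\tfrac{1}{k},\ldots,\tfrac{1}{k})$ from Theorem~\ref{theorem:lin1}. Since we are in the linear case, $\lambda\equiv 1$, so that $\boldsymbol{M}(\bar\xi)=\int_{\mathbb{S}_{k-1}}\boldsymbol{f}(\boldsymbol{x})\boldsymbol{f}(\boldsymbol{x})^\top\,\bar\xi(\D\boldsymbol{x})$, and the entries are just the moments $\int 1\,\D\bar\xi$, $\int x_i\,\D\bar\xi$ and $\int x_ix_j\,\D\bar\xi$ for $i,j\in\{1,\ldots,k\}$. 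The whole proof is a symmetry computation.

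First I would treat the leading entry and the first row and column. The $(0,0)$ entry equals $\int_{\mathbb{S}_{k-1}}1\,\bar\xi(\D\boldsymbol{x})=1$ because $\bar\xi$ is a probability measure. For the mixed entries $\int x_i\,\D\bar\xi$ I would invoke invariance of $\bar\xi$ under the reflection $x_i\mapsto -x_i$, which is an orthogonal map carrying $\mathbb{S}_{k-1}$ onto itself and preserving the uniform measure; the integrand is odd in $x_i$, so each such integral vanishes. The identical reflection argument annihilates the off-diagonal second moments $\int x_ix_j\,\D\bar\xi$ for $i\neq j$, since $x_ix_j$ is odd in $x_i$. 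Hence the matrix is already seen to be diagonal with a $1$ in the top-left corner.

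The heart of the matter is the diagonal block. By permutation symmetry of $\bar\xi$ (invariance under the orthogonal map exchanging two coordinate axes) the values $c_i:=\int_{\mathbb{S}_{k-1}}x_i^2\,\bar\xi(\D\boldsymbol{x})$ all coincide, say $c_1=\cdots=c_k=c$. The key step that avoids any explicit surface integral is to exploit the defining relation $x_1^2+\cdots+x_k^2=1$ on $\mathbb{S}_{k-1}$: summing gives $kc=\int_{\mathbb{S}_{k-1}}(x_1^2+\cdots+x_k^2)\,\bar\xi(\D\boldsymbol{x})=\int_{\mathbb{S}_{k-1}}1\,\bar\xi(\D\boldsymbol{x})=1$, so $c=\tfrac{1}{k}$. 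Assembling the pieces yields $\boldsymbol{M}(\bar\xi)=\mathrm{diag}(1,\tfrac{1}{k},\ldots,\tfrac{1}{k})$, matching the simplex information matrix. There is no genuine obstacle; the only point deserving a word of care is the rigorous justification that the uniform (normalised surface) measure on $\mathbb{S}_{k-1}$ really is invariant under these coordinate reflections and permutations, which holds because they are orthogonal transformations and the uniform measure is the unique rotation-invariant probability measure on the sphere.
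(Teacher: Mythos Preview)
Your argument is correct and follows essentially the same route as the paper's own proof: reflection invariance to kill the first and mixed second moments, coordinate-permutation invariance to equalise the diagonal second moments, and the sphere relation $x_1^2+\cdots+x_k^2=1$ to pin down their common value as $\tfrac{1}{k}$. The only cosmetic differences are that the paper phrases everything in expectation notation $(X_1,\ldots,X_k)\sim\xi$ rather than integrals and treats the degenerate case $k=1$ in a separate sentence.
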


\begin{proof}
	Let $\xi$ be the uniform design (or better: uniform probability measure) on 
	$\mathbb{S}_{k-1}$.\\ 
	We start with $k\geq 2$. For all $i\in\{1,2,\ldots,k\}$ let $(X_1,\ldots,X_i,\ldots,X_k)\sim\xi$. It follows $(X_1,\ldots,-X_i,\ldots,X_k),$ $(X_i,\ldots,X_1,\ldots,X_k)\sim\xi$. So we have $\mathrm{E}(X_i)=\mathrm{E}(-X_i)$ and $\mathrm{E}(X_i)=0$ for all $i$. $\mathrm{E}(X_i X_j)=\mathrm{E}((-X_i) X_j)$ and $\mathrm{E}(X_i X_j)=0$ for all $i\neq j$. It is $\mathrm{Var}(X_1)=\mathrm{Var}(X_i)$. With $\mathrm{E}(X_i)=0$ we get $\mathrm{E}(X_1^2)=\mathrm{E}(X_i^2)$ for all $i$. It is $k\,\mathrm{E}(X_1^2)=\mathrm{E}(X_1^2)+\ldots+\mathrm{E}(X_k^2)=\mathrm{E}(X_1^2+\ldots+X_k^2)=\mathrm{E}(1)=1$. Hence, $\mathrm{E}(X_1^2)=\ldots=\mathrm{E}(X_k^2)=\frac{1}{k}$.
	So it is obvious
	\begin{align*}
		\boldsymbol{M}(\xi)&
		= \int_{\mathbb{S}_{k-1}} (1, x_1, \ldots, x_k)^\top (1, x_1, \ldots, x_k)\ \xi(\D\boldsymbol{x})
		= \mathrm{diag}(1,\tfrac{1}{k},\ldots,\tfrac{1}{k})\ .
	\end{align*}
	For $k=1$ the sphere $\mathbb{S}_{0}$ and the vertices of the simplex $[-1,1]$ are the same. So the information matrices coincide.			
\end{proof}

Hence, the uniform design is also $D$-optimal.

\setcounter{equation}{0} 
\section{Non-linear Models}
\label{sec:nonlinMod}
In this section we want to develop our main results. Invariance and equivariance (see \cite{Radloff:2016}) help to reduce the complexity of this endeavour.

\begin{lemma} \label{L2} 
	A (locally) $D$-optimal design is concentrated on the surface of $\mathscr{X}=\mathbb{B}_k$ and is equivariant with respect to rotations.
\end{lemma}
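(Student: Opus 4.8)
The plan is to treat the two assertions separately: the equivariance is essentially formal and will follow from the transformation behaviour of the model together with \cite{Radloff:2016}, whereas the concentration on the sphere is the substantive point and will be obtained by first symmetrising and then pushing the support radially outward inside the level sets of the linear predictor. Throughout write $\boldsymbol{\beta}_{-0}=(\beta_1,\ldots,\beta_k)^\top$. If $\boldsymbol{\beta}_{-0}=\mathbb{O}_k$, then $\lambda$ is constant and we are back in the linear model of Section~\ref{sec:linMod}, where both claims already hold; so I assume $\boldsymbol{\beta}_{-0}\neq\mathbb{O}_k$ and set $\boldsymbol{u}=\boldsymbol{\beta}_{-0}/\|\boldsymbol{\beta}_{-0}\|$.

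For the equivariance I would first record the transformation behaviour of the building blocks under a rotation $\boldsymbol{Q}\in O(k)$. With $\tilde{\boldsymbol{Q}}=\left(\begin{smallmatrix}1&\mathbb{O}_k^\top\\\mathbb{O}_k&\boldsymbol{Q}\end{smallmatrix}\right)\in O(k+1)$ one has $\boldsymbol{f}(\boldsymbol{Q}\boldsymbol{x})=\tilde{\boldsymbol{Q}}\boldsymbol{f}(\boldsymbol{x})$, so the linear predictor is preserved under the simultaneous change $\boldsymbol{x}\mapsto\boldsymbol{Q}\boldsymbol{x}$, $\boldsymbol{\beta}\mapsto\tilde{\boldsymbol{Q}}\boldsymbol{\beta}$, and consequently $\boldsymbol{M}(\boldsymbol{Q}\boldsymbol{x},\tilde{\boldsymbol{Q}}\boldsymbol{\beta})=\tilde{\boldsymbol{Q}}\,\boldsymbol{M}(\boldsymbol{x},\boldsymbol{\beta})\,\tilde{\boldsymbol{Q}}^\top$. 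Since $\mathbb{B}_k$ and $\mathscr{B}=\mathbb{R}^{k+1}$ are rotation invariant and $|\det\tilde{\boldsymbol{Q}}|=1$, this is exactly the equivariance setting of \cite{Radloff:2016}, so that if $\xi^\ast$ is $D$-optimal at $\boldsymbol{\beta}$ then its image under $\boldsymbol{Q}$ is $D$-optimal at $\tilde{\boldsymbol{Q}}\boldsymbol{\beta}$. Specialising to the rotations fixing $\boldsymbol{u}$ (the stabiliser of $\boldsymbol{\beta}$) shows that the optimal value is invariant under this subgroup; averaging any optimal design over the compact stabiliser with respect to Haar measure then yields, by Minkowski's concavity of $\det^{1/(k+1)}$ on the nonnegative definite matrices, an optimal design invariant under all rotations about the $\boldsymbol{u}$-axis. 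This is the asserted equivariance.

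For the concentration on $\mathbb{S}_{k-1}$ I would work with such a rotationally symmetric design and choose an orthonormal basis whose first axis is $\boldsymbol{u}$, writing $\boldsymbol{x}=(s,\boldsymbol{w}^\top)^\top$ with $s=\boldsymbol{u}^\top\boldsymbol{x}\in[-1,1]$, $\boldsymbol{w}\in\mathbb{R}^{k-1}$, $s^2+\|\boldsymbol{w}\|^2\le1$. The intensity $\lambda=\lambda(\beta_0+\|\boldsymbol{\beta}_{-0}\|\,s)$ then depends on $s$ only, and the $O(k-1)$-symmetry in $\boldsymbol{w}$ gives $\mathrm{E}[\boldsymbol{w}\mid s]=\mathbb{O}_{k-1}$ and $\mathrm{E}[\boldsymbol{w}\boldsymbol{w}^\top\mid s]=\tfrac{1}{k-1}\mathrm{E}[\|\boldsymbol{w}\|^2\mid s]\,\mathbb{I}_{k-1}$. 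Hence the information matrix is block diagonal and, with all expectations taken under $\xi$, its determinant factorises as
\[
\det\boldsymbol{M}(\xi,\boldsymbol{\beta})=\left(\tfrac{1}{k-1}\,\mathrm{E}[\lambda\,\|\boldsymbol{w}\|^2]\right)^{k-1}\left(\mathrm{E}[\lambda]\,\mathrm{E}[\lambda s^2]-\mathrm{E}[\lambda s]^2\right).
\]
The second factor depends only on the marginal distribution of $s$; holding that marginal fixed, the determinant is maximised by making the first factor as large as possible, and since $\lambda\ge0$ and $\|\boldsymbol{w}\|^2\le1-s^2$ this forces $\|\boldsymbol{w}\|^2=1-s^2$, i.e.\ $\|\boldsymbol{x}\|=1$, almost surely. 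Thus any design is matched or improved by one supported on $\mathbb{S}_{k-1}$ (for $k=1$ there is no $\boldsymbol{w}$-component and the statement reduces to the one-dimensional marginal treated separately).

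The two matrix identities and the symmetrisation are routine; the step carrying the content, and the one I would be most careful about, is the determinant factorisation together with the observation that the $(2\times2)$ block is insensitive to the radial position of the support, so that pushing $\boldsymbol{w}$ to the boundary of each slice increases the first factor while leaving the second unchanged. One should also verify that the bracketed $(2\times2)$ determinant is positive, for otherwise $\det\boldsymbol{M}(\xi,\boldsymbol{\beta})\le0$ and the design cannot be optimal; this is what legitimises maximising the product factor by factor.
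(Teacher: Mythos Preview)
Your equivariance argument matches the paper's: both construct the block-orthogonal matrix $\tilde{\boldsymbol{Q}}$, observe that the linear predictor is preserved under the simultaneous action on $\boldsymbol{x}$ and $\boldsymbol{\beta}$, and invoke \cite{Radloff:2016}. The additional symmetrisation over the stabiliser you carry out is not part of the lemma's equivariance claim as such, but it is a legitimate preparatory step for what follows.

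For the concentration on $\mathbb{S}_{k-1}$ you take a genuinely different route. The paper asserts a direct Loewner majorisation: any design with an interior support point can be dominated in the positive semidefinite order by one on the sphere (the natural construction being to move the mass at an interior point along a chord perpendicular to $\boldsymbol{\beta}_{-0}$, on which $\lambda$ is constant, and to use convexity of $\boldsymbol{z}\mapsto\boldsymbol{z}\boldsymbol{z}^\top$). This is short, works design by design, needs no prior symmetrisation, and in fact yields the stronger conclusion that every Loewner-isotonic criterion is optimised on the surface. Your argument instead symmetrises first, then exploits the resulting block-diagonal information matrix to factorise the determinant into a piece depending only on the marginal of $s$ and a piece monotone in $\|\boldsymbol{w}\|^2$, so that pushing $\boldsymbol{w}$ to the boundary of each slice increases the first factor while leaving the $(2\times2)$ factor unchanged. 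This is more computational and specific to the $D$-criterion, but it has the advantage of producing exactly the block structure and determinant factorisation that the paper derives separately in Lemmas~\ref{lem4} and~\ref{L4a}, so part of your work anticipates what comes next. Both arguments leave $k=1$ aside; as you note, there is no orthogonal direction there, and the paper effectively handles that case later via the marginal analysis.
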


\begin{remark}
	Equivariance in this context means: If the design or design region is rotated, the parameter space must be rotated in a corresponding way. This corresponding rotation is specified in the following proof.\hfill$\Diamond$
\end{remark}

\begin{proof}
	Every information matrix of a design with an inner point as a support point can be majorised (in the positive semidefinite sense) by an information matrix which is only defined on the surface of the ball. Because of the strict convexity of the $D$-criterion we only have to look at the surface.\\
	In the context of rotational equivariance the statements in \cite{Radloff:2016} are specialised:\\
	If $\boldsymbol{g}$ is a rotation (a special one-to-one-mapping) of the design region $\mathscr{X}=\mathbb{B}_k$ with rotation matrix $\boldsymbol{R}_{\boldsymbol{g}}$, so that \mbox{$\boldsymbol{g}(\boldsymbol{x})=\boldsymbol{R}_{\boldsymbol{g}}\boldsymbol{x}$}, then there exists an orthogonal $(k+1)\times(k+1)$-matrix $\boldsymbol{Q}_{\boldsymbol{g}}$ with determinant 1, namely 
	\[\boldsymbol{Q}_{\boldsymbol{g}}= \begin{pmatrix} 1 &\mathbb{O}_k^\top \\ \mathbb{O}_k&\boldsymbol{R}_{\boldsymbol{g}}\end{pmatrix},\]
	such that $\boldsymbol{f}(\boldsymbol{g}(\boldsymbol{x}))=\boldsymbol{Q}_{\boldsymbol{g}}\boldsymbol{f}(\boldsymbol{x})$.
	With the corresponding rotation $\boldsymbol{\tilde{g}}(\boldsymbol{\beta})=(\boldsymbol{Q}_{\boldsymbol{g}}^\top)^{-1}\boldsymbol{\beta}$, which is $\boldsymbol{\tilde{g}}(\boldsymbol{\beta})=\boldsymbol{Q}_{\boldsymbol{g}}\boldsymbol{\beta}$ because of orthogonality, we have $\boldsymbol{f}(\boldsymbol{g}(\boldsymbol{x}))^\top \boldsymbol{\tilde{g}}(\boldsymbol{\beta})=\boldsymbol{f}(\boldsymbol{x})^\top\boldsymbol{\beta}$. And, if $\xi$ is a (locally) $D$-optimal design for $\boldsymbol{\beta}$, then $\xi^{\boldsymbol{g}}:=\xi\circ\boldsymbol{g}^{-1}$ is (locally) $D$-optimal for $\boldsymbol{\tilde{g}}(\boldsymbol{\beta})$. 	
\end{proof}

\begin{remark} 
	If $(\beta_1,\ldots,\beta_k)^\top\neq\mathbb{O}_k$, there is a rotation $\boldsymbol{\tilde{g}}$ such that $\boldsymbol{\tilde{g}}(\beta_0,\beta_1,\ldots,\beta_k)^\top\linebreak[1]=(\beta_0,\tilde{\beta}_1,0,\ldots,0)$ with $\tilde{\beta}_1=||(\beta_1,\ldots,\beta_k)^\top||>0$, where $||\cdot||$ is the ($k$-dimensional) Euclidean norm. If $(\beta_1,\ldots,\beta_k)^\top=\mathbb{O}_k$, then no rotation is needed. So only the case $\boldsymbol{\beta}\in\mathscr{B}$ with
	\begin{equation}
		\label{eq:speccase}
			\beta_1\geq 0, \beta_2=\ldots=\beta_k=0
	\end{equation}
	has to be considered.
	The search for a (locally) optimal design with an initial guess of the parameter vector in the whole parameter space $\mathscr{B}$ reduces to only the length of this vector.\hfill$\Diamond$
\end{remark}

The next results mostly need~\eqref{eq:speccase} and $\beta_1>0$. The case $\beta_1=0$ will be discussed at the end of this section in Remark~\ref{rem1}.

\begin{lemma} 
\label{lem3}
	For $\boldsymbol{\beta}$ satisfying~\eqref{eq:speccase} the $D$-criterion is invariant with respect to rotations of $x_2,\ldots,x_k$.
\end{lemma}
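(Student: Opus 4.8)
The plan is to specialise the equivariance machinery from the proof of Lemma~\ref{L2} to the subgroup of rotations that act only on the coordinates $x_2,\ldots,x_k$ and fix $x_1$ (so the statement is non-trivial only for $k\geq 2$). Such a rotation $\boldsymbol{g}$ has rotation matrix of the block form
\[
	\boldsymbol{R}_{\boldsymbol{g}}=\begin{pmatrix} 1 & \mathbb{O}_{k-1}^\top \\ \mathbb{O}_{k-1} & \boldsymbol{R}\end{pmatrix},
\]
with $\boldsymbol{R}$ an orthogonal $(k-1)\times(k-1)$ matrix. By Lemma~\ref{L2} it induces the orthogonal matrix
\[
	\boldsymbol{Q}_{\boldsymbol{g}}=\begin{pmatrix} \mathbb{I}_2 & \mathbb{O}_{2\times(k-1)} \\ \mathbb{O}_{(k-1)\times 2} & \boldsymbol{R}\end{pmatrix}
\]
with $\det(\boldsymbol{Q}_{\boldsymbol{g}})=1$ and $\boldsymbol{f}(\boldsymbol{g}(\boldsymbol{x}))=\boldsymbol{Q}_{\boldsymbol{g}}\boldsymbol{f}(\boldsymbol{x})$.

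The decisive observation is that a parameter $\boldsymbol{\beta}$ satisfying~\eqref{eq:speccase}, i.e. $\boldsymbol{\beta}=(\beta_0,\beta_1,0,\ldots,0)^\top$, is a fixed point of this transformation: since $\boldsymbol{R}$ acts only on the (vanishing) last $k-1$ components, $\boldsymbol{Q}_{\boldsymbol{g}}^\top\boldsymbol{\beta}=\boldsymbol{\beta}$. Equivalently, the linear predictor $\boldsymbol{f}(\boldsymbol{x})^\top\boldsymbol{\beta}=\beta_0+\beta_1 x_1$ depends only on $x_1$, which is untouched by $\boldsymbol{g}$, so the intensity factor $\lambda(\boldsymbol{f}(\boldsymbol{x})^\top\boldsymbol{\beta})$ is constant along each orbit.

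Using this, I would perform the change of variables $\boldsymbol{x}=\boldsymbol{g}(\boldsymbol{y})$ in the information integral for the rotated design $\xi^{\boldsymbol{g}}=\xi\circ\boldsymbol{g}^{-1}$. Inserting $\boldsymbol{f}(\boldsymbol{g}(\boldsymbol{y}))=\boldsymbol{Q}_{\boldsymbol{g}}\boldsymbol{f}(\boldsymbol{y})$ together with $\lambda(\boldsymbol{f}(\boldsymbol{g}(\boldsymbol{y}))^\top\boldsymbol{\beta})=\lambda(\boldsymbol{f}(\boldsymbol{y})^\top\boldsymbol{Q}_{\boldsymbol{g}}^\top\boldsymbol{\beta})=\lambda(\boldsymbol{f}(\boldsymbol{y})^\top\boldsymbol{\beta})$ yields the conjugation identity
\[
	\boldsymbol{M}(\xi^{\boldsymbol{g}},\boldsymbol{\beta})=\boldsymbol{Q}_{\boldsymbol{g}}\,\boldsymbol{M}(\xi,\boldsymbol{\beta})\,\boldsymbol{Q}_{\boldsymbol{g}}^\top.
\]
Taking determinants and using $\det(\boldsymbol{Q}_{\boldsymbol{g}})=1$ gives $\det(\boldsymbol{M}(\xi^{\boldsymbol{g}},\boldsymbol{\beta}))=\det(\boldsymbol{M}(\xi,\boldsymbol{\beta}))$, which is precisely the claimed invariance of the $D$-criterion.

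The computation itself is entirely routine; the only point that genuinely carries the argument is the one highlighted above, namely that the special form~\eqref{eq:speccase} turns the generic \emph{equivariance} of Lemma~\ref{L2} (which moves $\boldsymbol{\beta}$ to $\boldsymbol{\tilde{g}}(\boldsymbol{\beta})$) into a genuine \emph{invariance}, because here $\boldsymbol{\tilde{g}}$ fixes $\boldsymbol{\beta}$. I would therefore state explicitly that this is what licenses restricting to this subgroup, and note that no integrability or regularity issue arises, since $\boldsymbol{g}$ is a measure-preserving bijection of the compact region $\mathbb{B}_k$.
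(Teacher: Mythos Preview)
Your proof is correct and follows essentially the same route as the paper: identify the block form of $\boldsymbol{Q}_{\boldsymbol{g}}$ for rotations of $x_2,\ldots,x_k$, observe that $\boldsymbol{\beta}$ satisfying~\eqref{eq:speccase} is fixed by $\boldsymbol{\tilde{g}}$, deduce the conjugation identity $\boldsymbol{M}(\xi^{\boldsymbol{g}},\boldsymbol{\beta})=\boldsymbol{Q}_{\boldsymbol{g}}\boldsymbol{M}(\xi,\boldsymbol{\beta})\boldsymbol{Q}_{\boldsymbol{g}}^\top$, and take determinants. The only cosmetic difference is that the paper cites the equivariance framework of \cite{Radloff:2016} for the conjugation step, whereas you spell out the change-of-variables computation directly.
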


\begin{proof}
	Analogously to Lemma~\ref{L2}, if $\boldsymbol{g}$ is a rotation of $x_2,\ldots,x_k$, then there exists a \mbox{$(k-1)\times(k-1)$}-rotation matrix $\boldsymbol{R}_{\boldsymbol{g}}$, such that 
	\[\boldsymbol{Q}_{\boldsymbol{g}}= \begin{pmatrix} \text{\scriptsize$\begin{matrix}1&\\&1\end{matrix}$}&\mathbb{O}\\ \mathbb{O}&\boldsymbol{R}_{\boldsymbol{g}}\end{pmatrix}\]
	is orthogonal with determinant 1 and $\boldsymbol{f}(\boldsymbol{g}(\boldsymbol{x}))=\boldsymbol{Q}_{\boldsymbol{g}}\boldsymbol{f}(\boldsymbol{x})$. Hence, $\boldsymbol{\tilde{g}}(\boldsymbol{\beta})=\boldsymbol{Q}_{\boldsymbol{g}} \boldsymbol{\beta}$ and for all rotations $\boldsymbol{g}$ of $x_2,\ldots,x_k$ we have $\boldsymbol{\tilde{g}}(\boldsymbol{\beta})=\boldsymbol{\beta}$ for all $\boldsymbol{\beta}$ in~\eqref{eq:speccase}. And so in notation of \cite{Radloff:2016}
for all $\boldsymbol{\beta}$ satisfying~\eqref{eq:speccase} and rotations $\boldsymbol{g}$ of $x_2,\ldots,x_k$ we get
	\begin{equation*} 
		\det[\boldsymbol{M}(\xi^{\boldsymbol{g}},\boldsymbol{\beta})]
		=\det[\boldsymbol{M}(\xi^{\boldsymbol{g}},\boldsymbol{\tilde{g}}(\boldsymbol{\beta}))]
		=\det[\boldsymbol{Q}_{\boldsymbol{g}}\boldsymbol{M}(\xi,\boldsymbol{\beta})\boldsymbol{Q}_{\boldsymbol{g}}^\top]
		=\det[\boldsymbol{M}(\xi,\boldsymbol{\beta})] \ . \qedhere
	\end{equation*}
\end{proof}

So we can find an optimal design within the class of invariant designs on the surface of the ball. The concept of marginal and conditional designs (see \cite{Cook:1980}) can be used.

\begin{lemma} \label{lem4} 
	For $\boldsymbol{\beta}$ satisfying~\eqref{eq:speccase} the invariant designs (on the surface) with respect to rotations of $x_2,\ldots,x_k$ are given by $\xi_1\otimes\overline{\eta}$, where $\xi_1$ is a marginal design on $[-1,1]$ and $\overline{\eta}$ is a Markov kernel (conditional design). For fixed $x_1$ the kernel $\overline{\eta}(x_1,\cdot)$ is the uniform distribution on the surface of a $(k-1)$-dimensional ball with radius $\sqrt{1-x_1^2}$.
\end{lemma}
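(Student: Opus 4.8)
The plan is to prove the characterisation in two stages. First I would show that \emph{every} probability measure on the surface $\mathbb{S}_{k-1}$ factorises into an $x_1$-marginal and a fibrewise conditional kernel, and then that rotational invariance forces this conditional to be the uniform law on each fibre; the converse inclusion is immediate. To set up the factorisation, write $\boldsymbol{x}=(x_1,\boldsymbol{y})$ with $\boldsymbol{y}=(x_2,\ldots,x_k)$ and observe that, for fixed $x_1\in[-1,1]$, the points of $\mathbb{S}_{k-1}$ with that first coordinate form exactly the fibre $\mathbb{S}_{k-2}(\sqrt{1-x_1^2})$ in $\boldsymbol{y}$-space. Since $\mathbb{S}_{k-1}$ is a compact metric, hence Polish, space, the disintegration theorem yields $\xi=\xi_1\otimes\overline{\eta}$ for any probability measure $\xi$ on $\mathbb{S}_{k-1}$, where $\xi_1$ is the image of $\xi$ under $\boldsymbol{x}\mapsto x_1$ and $\overline{\eta}(x_1,\cdot)$ is a Markov kernel concentrated on the fibre $\mathbb{S}_{k-2}(\sqrt{1-x_1^2})$; moreover this factorisation is $\xi_1$-almost surely unique.

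Next I would feed invariance into this decomposition. A rotation $\boldsymbol{g}$ of $x_2,\ldots,x_k$ fixes $x_1$ and acts on each fibre through $\boldsymbol{R}_{\boldsymbol{g}}\in\mathrm{SO}(k-1)$, so the $x_1$-marginal of $\xi^{\boldsymbol{g}}$ is again $\xi_1$ while its conditional is $\overline{\eta}(x_1,\cdot)\circ\boldsymbol{R}_{\boldsymbol{g}}^{-1}$. Imposing $\xi^{\boldsymbol{g}}=\xi$ and invoking the essential uniqueness of the disintegration gives $\overline{\eta}(x_1,\cdot)=\overline{\eta}(x_1,\cdot)\circ\boldsymbol{R}_{\boldsymbol{g}}^{-1}$ for $\xi_1$-almost every $x_1$. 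To promote this to invariance under all rotations at once, I would take a countable dense subgroup of $\mathrm{SO}(k-1)$, discard the corresponding countable union of $\xi_1$-null sets, and extend to the whole group by the weak continuity of $\boldsymbol{R}\mapsto\overline{\eta}(x_1,\cdot)\circ\boldsymbol{R}^{-1}$. Thus, for $\xi_1$-almost every $x_1$, the conditional $\overline{\eta}(x_1,\cdot)$ is invariant under the full rotation group of its fibre.

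The decisive ingredient is the uniqueness of the invariant probability measure on a sphere: as $\mathrm{SO}(k-1)$ is compact and acts transitively on $\mathbb{S}_{k-2}(\sqrt{1-x_1^2})$ (for $k\geq 3$; the single-point fibres at the poles $x_1=\pm1$, and the low-dimensional case, are handled directly), there is exactly one invariant probability measure, namely the normalised surface measure, that is, the uniform distribution. Hence $\overline{\eta}(x_1,\cdot)$ equals the uniform law on $\mathbb{S}_{k-2}(\sqrt{1-x_1^2})$ for $\xi_1$-almost every $x_1$, and since this uniform kernel is canonical and everywhere defined, I may redefine $\overline{\eta}$ to equal it on the exceptional $\xi_1$-null set without changing $\xi$, obtaining the stated form on all of $[-1,1]$. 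The converse is routine: a product $\xi_1\otimes\overline{\eta}$ with uniform conditional is invariant by a direct Fubini computation, since each fibre measure is rotation-invariant. I expect the main obstacle to be exactly this measure-theoretic descent---transferring invariance of the joint law to almost-everywhere invariance of the kernel and securing it for all rotations simultaneously rather than one at a time.
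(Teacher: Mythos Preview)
Your proposal is correct and follows essentially the same route as the paper: disintegrate $\xi$ on the sphere into an $x_1$-marginal and a fibrewise kernel, then use uniqueness of the rotation-invariant probability measure on each fibre to force the kernel to be uniform. If anything, you are more scrupulous than the paper about the measure-theoretic descent---the countable-dense-subgroup argument and the redefinition on the $\xi_1$-null set---whereas the paper simply asserts that invariance of $\xi$ entails invariance of $\overline{\eta}(x_1,\cdot)$ for all $x_1$ and then invokes Haar uniqueness.
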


\begin{remark} 
	If $x_1\in\{-1,1\}$, the $(k-1)$-dimensional ball with the uniform distribution is degenerated as a point. So it is only a one-point-measure.\hfill$\Diamond$
\end{remark}

\begin{proof}
	Each design / probability measure $\xi$ on a $k$-dimensional Borel set $(\mathbb{R}^k,\mathcal{B}(\mathbb{R}^k))=(\mathbb{R}\times\mathbb{R}^{k-1},\mathcal{B}(\mathbb{R})\otimes\mathcal{B}(\mathbb{R}^{k-1}))$ can be split into a marginal probability measure $\xi_1$ on $(\mathbb{R},\mathcal{B}(\mathbb{R}))$ with $\xi_1(A)=\xi(A\times\mathbb{R}^{k-1})$ and a kernel $\overline{\eta}$ with source $(\mathbb{R},\mathcal{B}(\mathbb{R}))$ and target $(\mathbb{R}^{k-1},\mathcal{B}(\mathbb{R}^{k-1}))$ (unique up to sets of measure zero), see, for example, \citet[Section 8.3]{Klenke:2014}. Hence, $\xi=\xi_1\otimes\overline{\eta}$.
	
	\noindent We only want to focus on designs on $\mathbb{S}_{k-1}$. 
	So the domains of these measures and kernels can be restricted. We have $\xi(\mathbb{S}_{k-1})=1$, $\xi_1([-1,1])=1$ and $\overline{\eta}(x_1,\mathbb{S}_{k-2}({\textstyle\sqrt{1-x_1^2}}))=1$ for all $x_1\in[-1,1]$.
	
	\noindent The design $\xi$ should be invariant with respect to rotations of $x_2,\ldots,x_k$. So for all $x_1\in[-1,1]$ the probability measures $\overline{\eta}(x_1,\cdot)$ have to be invariant, too. The group of all rotations of $x_2,\ldots,x_k$ is a locally compact group, so that the Haar-probability-measure is unique (see \citet[\S 60]{Halmos:1974}). The uniform distribution on $\mathbb{S}_{k-2}(\sqrt{1-x_1^2})$ is such an invariant measure. Hence, $\overline{\eta}(x_1,\cdot)$ must be uniform.
\end{proof}

\begin{lemma} 
\label{L4a}
	The information matrix for $\xi_1\otimes\overline{\eta}$ in Lemma~\ref{lem4} is
\renewcommand*\arraystretch{1.2}
	\begin{equation}
			\boldsymbol{M}(\xi_1\otimes\overline{\eta})=
					\left(\begin{array}{c|c}
							\hspace*{-0.5em}\begin{array}{cc}
							\int q\,\D\xi_1 		 & \int q \id \D\xi_1\\
							\int q \id \D\xi_1 & \int q \id^2 \D\xi_1
							\end{array} & \mathbb{O}_{2\times (k-1)}\\ \hline
							\mathbb{O}_{(k-1)\times 2} & \frac{1}{k-1} \int q\,(1-\id^2)\,\D\xi_1\ \mathbb{I}_{k-1}							
					\end{array} \right)
	\label{eq:infomatrix}
	\end{equation}
	with $q(x_1):=\lambda(\beta_0+\beta_1 x_1)$.
\renewcommand*\arraystretch{1}
\end{lemma}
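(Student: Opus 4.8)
The plan is to exploit the product structure $\xi_1\otimes\overline{\eta}$ to turn the $k$-dimensional integral defining $\boldsymbol{M}$ into an iterated integral, so that everything reduces to moments of the uniform distribution on a sphere, which are handled exactly as in Lemma~\ref{lem:lin1}.

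First I would write out the information matrix from its definition. Because $\boldsymbol{\beta}$ satisfies~\eqref{eq:speccase}, the linear predictor equals $\boldsymbol{f}(\boldsymbol{x})^\top\boldsymbol{\beta}=\beta_0+\beta_1 x_1$, so the intensity reduces to $\lambda(\beta_0+\beta_1 x_1)=q(x_1)$ and depends on $x_1$ alone. Writing $\boldsymbol{x}=(x_1,\boldsymbol{x}')$ with $\boldsymbol{x}'=(x_2,\ldots,x_k)^\top$ and disintegrating $\xi=\xi_1\otimes\overline{\eta}$ (Fubini for kernels) gives
\[
	\boldsymbol{M}(\xi_1\otimes\overline{\eta})
	=\int_{-1}^{1} q(x_1)\left(\int \boldsymbol{f}(\boldsymbol{x})\boldsymbol{f}(\boldsymbol{x})^\top\,\overline{\eta}(x_1,\D\boldsymbol{x}')\right)\xi_1(\D x_1),
\]
where $q(x_1)$ has been pulled out of the inner integral since it is constant on each fibre $\{x_1\}\times\mathbb{S}_{k-2}(\sqrt{1-x_1^2})$.

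The crux is then to evaluate the inner integral, i.e.\ the conditional expectation of the rank-one matrix $\boldsymbol{f}(\boldsymbol{x})\boldsymbol{f}(\boldsymbol{x})^\top$ under $\overline{\eta}(x_1,\cdot)$, the uniform distribution on $\mathbb{S}_{k-2}(\sqrt{1-x_1^2})$. Its entries are the conditional moments of $(x_2,\ldots,x_k)$, which I would obtain by repeating the symmetry argument of Lemma~\ref{lem:lin1}: the uniform measure on a centred sphere is invariant under sign changes and under permutations of the coordinates $x_2,\ldots,x_k$, so every first moment and every off-diagonal second moment vanishes, while each diagonal second moment equals $\frac{1-x_1^2}{k-1}$, because on the sphere $x_2^2+\ldots+x_k^2=1-x_1^2$ is shared equally among the $k-1$ coordinates. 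The row and column carrying $x_1$ are deterministic. Hence the inner integral is block-diagonal, with upper $2\times 2$ block $\bigl(\begin{smallmatrix}1&x_1\\x_1&x_1^2\end{smallmatrix}\bigr)$ and lower block $\frac{1-x_1^2}{k-1}\,\mathbb{I}_{k-1}$.

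Multiplying by $q(x_1)$ and integrating entrywise against $\xi_1$ then yields~\eqref{eq:infomatrix}. I do not expect a genuine obstacle; the symmetry computation is the only substantive step and is already carried out for the radius-one case in Lemma~\ref{lem:lin1}. The single point needing a word of care is the degenerate boundary $x_1\in\{-1,1\}$, where $\overline{\eta}(x_1,\cdot)$ collapses to a point mass at the centre (cf.\ the preceding remark); there all the stated conditional moments are zero and remain consistent with $\frac{1-x_1^2}{k-1}=0$, so the formula holds uniformly in $x_1$.
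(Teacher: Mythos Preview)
Your proof is correct and follows essentially the same route as the paper's own argument: both reduce the entries of the information matrix to integrals of $q(x_1)\,x_i^\kappa x_j^\nu$ and evaluate the conditional moments of $(x_2,\ldots,x_k)$ under $\overline{\eta}(x_1,\cdot)$ by the symmetry reasoning of Lemma~\ref{lem:lin1}. Your write-up is in fact more explicit than the paper's (which merely gestures at the symmetry calculation), and your remark on the degenerate fibres $x_1\in\{-1,1\}$ is a welcome clarification.
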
	

\begin{proof} Let $\boldsymbol{\tilde{x}}=(x_2,\ldots,x_k)$. We have to determine $\int_{\mathbb{S}_{k-1}} q(x_1)\, x_i^\kappa x_j^\nu\,$ $(\xi_1\otimes\overline{\eta})(\D(x_1,\boldsymbol{\tilde{x}}))$ for $i,j\in\{1,\ldots,k\}$ and $\kappa,\nu\in\{0,1\}$. Remembering that $\overline{\eta}(x_1,\cdot)$ is uniformly distributed, we use symmetry properties as in the proof of Lemma~\ref{lem:lin1} to do the calculations.
\end{proof}

To use the Kiefer-Wolfowitz Equivalence Theorem for $D$-optimality we need the structure of the sensitivity function $$\psi(\boldsymbol{x},\xi_1\otimes\overline{\eta}) = \lambda\!\left(\boldsymbol{f}(\boldsymbol{x})^\top\boldsymbol{\beta}\right)\boldsymbol{f}(\boldsymbol{x})^\top\boldsymbol{M}^{-1}\left(\xi_1\otimes\overline{\eta}\right)\boldsymbol{f}(\boldsymbol{x})\ .$$

\begin{lemma} 
\label{lem5}
	For the invariant designs $\xi_1\otimes\overline{\eta}$ with respect to rotations of $x_2,\ldots,x_k$ in~\eqref{eq:speccase} the sensitivity function $\psi$ is invariant (constant on orbits) and has for $\boldsymbol{x}\in\mathbb{S}_{k-1}$ the form
	\begin{equation}
		\psi(\boldsymbol{x},\xi_1\otimes\overline{\eta})=q(x_1)\cdot p_1(x_1) \quad\text{with}\quad \boldsymbol{x}=(x_1,\ldots,x_k)^\top
	\label{eq:5a}
	\end{equation}
	where $p_1$ is a polynomial of degree 2 in $x_1$.	
\end{lemma}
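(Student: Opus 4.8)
The plan is to exploit the block structure of the information matrix established in Lemma~\ref{L4a} together with the two defining features of the special case~\eqref{eq:speccase}: that the linear predictor collapses to $\beta_0+\beta_1 x_1$, and that $\boldsymbol{M}(\xi_1\otimes\overline{\eta})$ is block diagonal, with a $2\times 2$ upper block acting on $(1,x_1)$ and a scalar multiple of $\mathbb{I}_{k-1}$ acting on $(x_2,\ldots,x_k)$.

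First I would note that under~\eqref{eq:speccase} we have $\boldsymbol{f}(\boldsymbol{x})^\top\boldsymbol{\beta}=\beta_0+\beta_1 x_1$, so the leading factor of the sensitivity function is exactly $\lambda(\beta_0+\beta_1 x_1)=q(x_1)$, a function of $x_1$ alone. It then remains to analyse the quadratic form $\boldsymbol{f}(\boldsymbol{x})^\top\boldsymbol{M}^{-1}(\xi_1\otimes\overline{\eta})\boldsymbol{f}(\boldsymbol{x})$.

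Since $\boldsymbol{M}(\xi_1\otimes\overline{\eta})$ in~\eqref{eq:infomatrix} is block diagonal, so is its inverse: the upper $2\times 2$ block $A^{-1}$ inverts the block built from the moments $\int q\,\D\xi_1$, $\int q\,\id\,\D\xi_1$, $\int q\,\id^2\,\D\xi_1$, while the lower block is $c^{-1}\mathbb{I}_{k-1}$ with $c=\frac{1}{k-1}\int q\,(1-\id^2)\,\D\xi_1$. Splitting $\boldsymbol{f}(\boldsymbol{x})=(1,x_1,x_2,\ldots,x_k)^\top$ accordingly gives
$$\boldsymbol{f}(\boldsymbol{x})^\top\boldsymbol{M}^{-1}\boldsymbol{f}(\boldsymbol{x})=(1,x_1)\,A^{-1}\,(1,x_1)^\top+c^{-1}(x_2^2+\ldots+x_k^2).$$
The first summand is manifestly a polynomial of degree at most two in $x_1$. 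For the second summand I would use that $\boldsymbol{x}\in\mathbb{S}_{k-1}$, hence $x_2^2+\ldots+x_k^2=1-x_1^2$, turning it into $c^{-1}(1-x_1^2)$, again a polynomial of degree two in $x_1$. Collecting the terms defines $p_1(x_1)$ and yields $\psi(\boldsymbol{x},\xi_1\otimes\overline{\eta})=q(x_1)\,p_1(x_1)$.

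Finally, the invariance (constancy on orbits) is immediate from this representation: the orbit of $\boldsymbol{x}\in\mathbb{S}_{k-1}$ under rotations of $x_2,\ldots,x_k$ consists of all points sharing the same first coordinate $x_1$, and both $q(x_1)$ and $p_1(x_1)$ depend on $\boldsymbol{x}$ only through $x_1$. The one step deserving care is the assertion that $p_1$ has degree exactly two rather than merely at most two: the coefficient of $x_1^2$ equals $a_{22}-c^{-1}$, where $a_{22}$ is the lower-right entry of $A^{-1}$, and one should verify this does not vanish. I expect this to be the only genuine obstacle; it can be settled by writing $a_{22}=\left(\int q\,\D\xi_1\right)/\det A$ and comparing it with $c^{-1}$, or else the statement should be read as ``degree at most two'', which is all that the subsequent Kiefer--Wolfowitz argument requires.
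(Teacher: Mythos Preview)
Your proof is correct and follows essentially the same approach as the paper: invert the block-diagonal information matrix from Lemma~\ref{L4a}, split the quadratic form accordingly, and use $x_2^2+\ldots+x_k^2=1-x_1^2$ on $\mathbb{S}_{k-1}$ to reduce everything to a function of $x_1$. The paper additionally writes out the explicit formula for $p_1$ (which is reused in Lemmas~\ref{L6}--\ref{L9}) and remarks on the degenerate case $k=1$, but does not verify that the $x_1^2$-coefficient is nonzero either; as you correctly observe, ``degree at most two'' is all that the subsequent equivalence-theorem argument needs.
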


\begin{proof}
With $D:=\int q\,\D\xi_1\ \int q \id^2 \D\xi_1-\left(\int q \id \D\xi_1\right)^2$ we get from Lemma~\ref{L4a} 
\[\boldsymbol{M}^{-1}(\xi_1\otimes\overline{\eta}) =
\renewcommand*\arraystretch{1.4}
					\left(\begin{array}{c|c}
							\begin{array}{cc}
							\frac{1}{D}\int q \id^2 \D\xi_1 		 & -\frac{1}{D}\int q \id \D\xi_1\\
							-\frac{1}{D}\int q \id \D\xi_1 & \frac{1}{D}\int q\,\D\xi_1
							\end{array} & \mathbb{O}_{2\times (k-1)}\\ \hline
							\mathbb{O}_{(k-1)\times 2} & \frac{k-1}{\int q\,(1-\id^2)\,\D\xi_1}\ \mathbb{I}_{k-1}							
					\end{array} \right)
\renewcommand*\arraystretch{1}
\]
and with $x_2^2+\ldots+x_k^2=1-x_1^2$
\begin{multline*}
	\psi(\boldsymbol{x},\xi_1\otimes\overline{\eta})	
	=q(x_1) \left[  \frac{1}{D} \left(\int\! q \id^2 \D\xi_1 - 2x_1 \int\! q \id \D\xi_1 + x_1^2 \int\! q\, \D\xi_1 \right)  + \frac{(k-1) (1-x_1^2)}{\int q\,(1-\id^2)\,\D\xi_1}  \right] . 
\end{multline*}
If $k=1$, the diagonal matrix part $\frac{k-1}{\int q\,(1-\id^2)\,\D\xi_1}\ \mathbb{I}_{k-1}$ of the inverted information matrix and the second summand $\frac{(k-1) (1-x_1^2)}{\int q\,(1-\id^2)\,\D\xi_1}$ in the sensitivity function vanish.
\end{proof}

The intensity function is now assumed to satisfy the following four conditions {(see \cite{Konstantinou:2014} or \cite{Schmidt:2017})}:

\begin{enumerate}
	\item[(A1)] $\lambda$ is positive on $\mathbb{R}$ and twice continuously differentiable.
	\item[(A2)] $\lambda^\prime$ is positive on $\mathbb{R}$.
	\item[(A3)] The second derivative $u^{\prime\prime}$ of $u=\frac{1}{\lambda}$ is injective on $\mathbb{R}$.
	\item[(A4)] The function $\frac{\lambda^\prime}{\lambda}$ is a non-increasing function.
\end{enumerate}

\begin{remark}
	In \cite{Konstantinou:2014} or \cite{Schmidt:2017} the assumption (A4) looks a little bit different. There the function $\frac{\lambda}{\lambda^\prime}$ should be non-decreasing. But both statements are equivalent if we postulate (A1) and (A2).\hfill$\Diamond$
\end{remark}

\begin{lemma}
	If the intensity function $\lambda$ satisfies the conditions (A1), (A2), (A3) and (A4), then $q$ as defined in Lemma~\ref{L4a} with $\beta_1>0$ provides the same properties, respectively.
\end{lemma}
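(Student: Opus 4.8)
The plan is to exploit that $q$ is simply the composition of $\lambda$ with the affine map $a(x_1)=\beta_0+\beta_1 x_1$, whose slope $\beta_1$ is strictly positive. Writing $q=\lambda\circ a$, each of (A1)--(A4) for $q$ follows from the corresponding property of $\lambda$ by a chain-rule computation, together with two elementary observations: first, that $a$ is strictly increasing (since $\beta_1>0$), so it preserves the direction of monotonicity under composition; and second, that $a$ is a bijection of $\mathbb{R}$ (since $\beta_1\neq 0$), so it preserves injectivity under composition.

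First I would dispatch (A1) and (A2). Since $a$ is a polynomial it is $C^\infty$, hence $q=\lambda\circ a$ is twice continuously differentiable, and it is positive because $\lambda$ is; this gives (A1). Differentiating, $q'(x_1)=\beta_1\,\lambda'(a(x_1))$, which is positive as a product of the positive constant $\beta_1$ with the positive quantity $\lambda'(a(x_1))$; this gives (A2).

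Next I would treat (A4). Here $\frac{q'}{q}(x_1)=\beta_1\,\frac{\lambda'}{\lambda}(a(x_1))$. By (A4) for $\lambda$ the function $\lambda'/\lambda$ is non-increasing, and $a$ is increasing, so the composition $(\lambda'/\lambda)\circ a$ is again non-increasing; multiplying by the positive constant $\beta_1$ preserves this, so $q'/q$ is non-increasing.

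The one step requiring a little care is (A3). Setting $u=1/\lambda$ as in (A3), we have $1/q=u\circ a$. Differentiating twice and using that $a'=\beta_1$ is constant (so $a''=0$ and no extra term appears) gives $(1/q)''(x_1)=\beta_1^2\,u''(a(x_1))$. Now $u''$ is injective on $\mathbb{R}$ by (A3) for $\lambda$, and $a$ is a bijection of $\mathbb{R}$, so $u''\circ a$ is injective; scaling by the nonzero constant $\beta_1^2$ keeps it injective. Hence the second derivative of $1/q$ is injective, which is (A3) for $q$. The only genuine obstacle is to ensure that the second derivative of the reciprocal transforms cleanly under the reparametrization, and it does precisely because $a$ is affine.
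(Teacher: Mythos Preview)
Your argument is correct. The paper actually states this lemma without proof, treating the verification as routine; your proposal supplies exactly the straightforward chain-rule computation that the authors omit, and each step is valid.
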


\begin{example} \label{ex1}
	In Poisson regression every observation $Y_i$ in $\boldsymbol{x}_i$ is Poisson distributed with $\mathrm{E}(Y_i)=\exp(\boldsymbol{f}(\boldsymbol{x}_i)^\top\boldsymbol{\beta})$. The (transformed) intensity function is $q_\mathrm{P}$.\\
	A generalisation of the Poisson regression is the negative binomial regression (or Poisson-gamma regression). Every observation $Y_i$ in $\boldsymbol{x}_i$ is negative-binomially distributed with $\mathrm{E}(Y_i)=\mu_i :=\exp(\boldsymbol{f}(\boldsymbol{x}_i)^\top\boldsymbol{\beta})$ and $\mathrm{Var}(Y_i)=\mu_i+a\mu_i^2$ for a fixed $a\geq 0$. The (transformed) intensity function is $q_\mathrm{NB}$.\\
	Both regression models satisfy (A1), (A2), (A3) and (A4) for $\beta_1>0$:
	\begin{align*}
		q_\mathrm{P}(x_1)&=\exp(\beta_0+\beta_1 x_1)&q_\mathrm{NB}(x_1)&=\frac{\exp(\beta_0+\beta_1 x_1)}{1+a\exp(\beta_0+\beta_1 x_1)}\\
		q_\mathrm{P}^\prime(x_1)&=\beta_1\,\exp(\beta_0+\beta_1 x_1)&q_\mathrm{NB}^\prime(x_1)&=\frac{\beta_1\,\exp(\beta_0+\beta_1 x_1)}{(1+a\exp(\beta_0+\beta_1 x_1))^2}\\
		\frac{q_\mathrm{P}^\prime(x_1)}{q_\mathrm{P}(x_1)}&=\beta_1&\frac{q_\mathrm{NB}^\prime(x_1)}{q_\mathrm{NB}(x_1)}&=\frac{\beta_1}{1+a\exp(\beta_0+\beta_1 x_1)}\\		
		u_\mathrm{P}(x_1)&=\exp(-\beta_0-\beta_1 x_1)&u_\mathrm{NB}(x_1)&=\frac{1+a\exp(\beta_0+\beta_1 x_1)}{\exp(\beta_0+\beta_1 x_1)}\\
		u_\mathrm{P}^{\prime\prime}(x_1)&=\beta_1^2\,\exp(-\beta_0-\beta_1 x_1)&u_\mathrm{NB}^{\prime\prime}(x_1)&=\beta_1^2\,\exp(-\beta_0-\beta_1 x_1) \tag*{$\Diamond$}
	\end{align*}
\end{example}

\begin{lemma} \label{L6} 
	In~\eqref{eq:speccase}: If $q$ satisfies (A1), (A2) and (A3), then the (locally) $D$-optimal marginal design $\xi_1^\ast$ is concentrated on exactly 2 points $x_{11}^\ast, x_{12}^\ast$.
\end{lemma}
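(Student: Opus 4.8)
The plan is to characterise the optimal marginal design through the Kiefer--Wolfowitz Equivalence Theorem and then bound the number of points at which the sensitivity function $\psi$ attains its maximal value, the decisive input being assumption (A3). Existence of a $D$-optimal design follows from the usual compactness argument, and by Lemma~\ref{L4a} the information matrix $\boldsymbol{M}(\xi_1\otimes\overline{\eta})$ depends on $\xi_1$ only through $\int q\,\D\xi_1$, $\int q\,\id\,\D\xi_1$ and $\int q\,\id^2\,\D\xi_1$. I would first record that regularity forces $D=\int q\,\D\xi_1\int q\,\id^2\,\D\xi_1-(\int q\,\id\,\D\xi_1)^2>0$ and $\int q\,(1-\id^2)\,\D\xi_1>0$. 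By the strict Cauchy--Schwarz inequality (with weight $q\,\D\xi_1$, where $q>0$ by (A1)) the first requires at least two support points, and the second requires a support point in the open interval $(-1,1)$ when $k\ge 2$; hence every optimal $\xi_1^\ast$ has at least two points, at least one of them interior.

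For the upper bound I would invoke the Equivalence Theorem: $\xi_1^\ast\otimes\overline{\eta}$ is $D$-optimal if and only if $\psi(\boldsymbol{x},\xi_1^\ast\otimes\overline{\eta})\le k+1$ on $\mathscr{X}$, with equality on the support, where $k+1$ is the number of regression functions. By Lemma~\ref{lem5} the sensitivity function depends only on $x_1$ and equals $q(x_1)\,p_1(x_1)$ with $p_1$ a quadratic, so the support of $\xi_1^\ast$ consists of the $x_1\in[-1,1]$ at which $q\,p_1$ equals $k+1$. Since $q>0$, dividing by $q$ turns these into the zeros of
\[
  h(x_1):=p_1(x_1)-(k+1)\,u(\beta_0+\beta_1 x_1),\qquad u=\tfrac{1}{\lambda}.
\]

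The heart of the argument is a root count for $h$. Differentiating twice gives $h''(x_1)=p_1''(x_1)-(k+1)\beta_1^2\,u''(\beta_0+\beta_1 x_1)$; as $p_1$ is quadratic, $p_1''$ is constant, and by (A3) the map $u''$ is injective, hence strictly monotone, so (using $\beta_1>0$) $x_1\mapsto u''(\beta_0+\beta_1 x_1)$ is strictly monotone. Thus $h''$ is strictly monotone and has at most one zero, and two applications of Rolle's Theorem bound the number of zeros of $h$, counted with multiplicity, by three. Now each support point in $(-1,1)$ is an interior maximiser of $\psi$, hence a critical point, and since $q>0$ it is a double zero of $h$; a support point at $\pm1$ is at least a simple zero. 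If $\xi_1^\ast$ had $s$ interior and $b$ boundary support points, then $2s+b\le 3$. Combined with $s\ge 1$ and $s+b\ge 2$ from the regularity lower bound, this forces $s=1$, $b=1$, i.e. exactly two support points. The case $k=1$, where the lower-right block and the second summand of $\psi$ vanish (as noted after Lemma~\ref{lem5}), is treated separately and also yields exactly two points.

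The step I expect to be the main obstacle is making the multiplicity bookkeeping airtight: one must justify that every active interior point is genuinely a double zero of $h$ (that $\psi$ is $C^2$ by (A1) and that $\psi=k+1$ there forces $\psi'=0$), and that the Rolle count of three is not silently exceeded by tangencies at $\pm 1$. The strict monotonicity of $h''$ delivered by (A3) is the conceptually decisive point, but the clean translation from ``zeros of $h$'' to ``number of support points,'' together with the regularity argument excluding a design supported only on $\{-1,1\}$ for $k\ge 2$, is where the care is needed.
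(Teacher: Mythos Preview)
Your proposal is correct and follows essentially the same route as the paper: rewrite the Kiefer--Wolfowitz condition as $p_1(x_1)\le (k+1)\,u(\beta_0+\beta_1 x_1)$ with $u=1/\lambda$, and exploit (A3) to bound the number of contact points. The paper's proof is extremely terse here --- it simply cites \cite{Konstantinou:2014}, Lemma~1, for the ``at most two points'' step, and then rules out a one-point design via $D=0$ --- whereas you spell the argument out in full (strict monotonicity of $h''$ from (A3), Rolle, multiplicity count $2s+b\le 3$). Your explicit bookkeeping of interior versus boundary support points ($s=1$, $b=1$ for $k\ge 2$) in fact anticipates the localisation proved later in Lemma~\ref{L8}, so you obtain slightly more than the lemma states; the paper establishes those locations only afterwards.
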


\begin{proof}
  By the Kiefer-Wolfowitz Equivalence Theorem for $D$-optimality we have to proof 
	\[k+1\geq\psi(\boldsymbol{x},\xi_1\otimes\overline{\eta})=q(x_1)\cdot p_1(x_1)\quad\text{for all}\quad\boldsymbol{x}=(x_1,\ldots,x_k)^\top .\]
	This is equivalent to 
	\begin{equation}
		\frac{p_1(x_1)}{k+1}\leq \frac{1}{q(x_1)}\ .
	\label{eq:IneqEquivalenceThBiedermann}
	\end{equation}
	The proof then follows along the same lines as in \cite{Konstantinou:2014} where they discuss in the proof of Lemma~1 a similar inequality.
	This gives us the fact, that there are at most 2 points.\\
	Assume, that $\xi_1$ has only 1 support point. So $D$ in the proof of Lemma~\ref{lem5} would be 0 and the inverse of the information matrix and thus the polynomial $p_1$ would not exist. Contradiction. Hence, $\xi_1$ has exactly 2 points.
\end{proof}

The next lemma characterises these 2 points and their weights, while $x_{12}^\ast$ is specified in Lemma~\ref{L9}.

\begin{lemma}\label{L8}
	In the settings of Lemma~\ref{L6} a potential (locally) $D$-optimal marginal design $\xi_1$ has the weights 
	\begin{equation}
		\xi_1^\ast(x_{11}^\ast)=\frac{1}{k+1}, \qquad \xi_1^\ast(x_{12}^\ast)=\frac{k}{k+1}
	\end{equation} 
	where $x_{11}^\ast=1$ and $x_{12}^\ast\in[-1,1)$, where $x_{12}^\ast\neq -1$ for $k\geq 2$.
\end{lemma}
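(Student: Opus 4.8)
The plan is to read off the determinant of \eqref{eq:infomatrix} for a two‑point marginal design, use the Kiefer--Wolfowitz tangency structure to fix the \emph{locations} of the support points, and then run a one–dimensional optimisation in the weight to fix the \emph{weight}. First I would record the determinant. By Lemma~\ref{L6} the design $\xi_1^\ast$ sits on exactly two points $x_{11},x_{12}$ with weights $w,1-w$; writing $q_i=q(x_{1i})$ and using the block form of \eqref{eq:infomatrix}, the leading $2\times2$ block has determinant $w(1-w)q_1q_2(x_{11}-x_{12})^2$ and the lower block is $B\,\mathbb{I}_{k-1}$ with $B=\tfrac{1}{k-1}\bigl(wq_1(1-x_{11}^2)+(1-w)q_2(1-x_{12}^2)\bigr)$, so that
\begin{equation*}
  \det\boldsymbol{M}(\xi_1\otimes\overline{\eta})
  =\frac{w(1-w)\,q_1q_2\,(x_{11}-x_{12})^2}{(k-1)^{k-1}}
   \left(wq_1(1-x_{11}^2)+(1-w)q_2(1-x_{12}^2)\right)^{k-1}.
\end{equation*}

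Next I would locate the two points. The tangency inequality \eqref{eq:IneqEquivalenceThBiedermann} says $h:=\tfrac{1}{q}-\tfrac{p_1}{k+1}\ge0$ on $[-1,1]$ with equality exactly at the support points. Since $p_1$ has degree $2$, $h''$ differs from $(1/q)''$ only by a constant, and $(1/q)''$ is injective (property (A3) transferred to $q$, using $\beta_1>0$), hence strictly monotone with at most one zero; by Rolle's theorem $h$ then has at most three zeros counted with multiplicity. An interior support point is a double zero of $h\ge0$, so at most one of $x_{11},x_{12}$ is interior and at least one lies in $\{-1,1\}$ — this is exactly the dichotomy underlying Lemma~\ref{L6}. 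If both points were at $\pm1$ then $B=0$ and the determinant vanishes for $k\ge2$, contradicting regularity; hence for $k\ge2$ precisely one point is at the boundary and the other, $x_{12}^\ast$, lies in the open interval $(-1,1)$, in particular $x_{12}^\ast\neq-1$. For $k=1$ the lower block is absent, $B^{k-1}=1$, and $x_{12}^\ast=-1$ remains admissible.

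I would then pin down the weight. Placing the boundary point at $x_{11}^\ast$ makes $1-(x_{11}^\ast)^2=0$, so the first term in the bracket drops out and the determinant becomes a positive constant (in $w$) times $w(1-w)^k$. Since the optimal design is in particular optimal in its own weight, and $\tfrac{\D}{\D w}\,w(1-w)^k=(1-w)^{k-1}\bigl(1-(k+1)w\bigr)$ vanishes on $(0,1)$ only at $w=\tfrac{1}{k+1}$, the boundary point must carry weight $\tfrac{1}{k+1}$ and the other point weight $\tfrac{k}{k+1}$, as claimed.

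Finally, I would argue that the boundary point is $+1$ and not $-1$. Heuristically, because $q$ is strictly increasing (A2) it pays to place the single atom where the intensity is largest, i.e.\ at $x_1=1$. Rigorously one inserts the weight $\tfrac{1}{k+1}$ and the respective optimal interior point and compares the two candidate maxima $V(1)$ and $V(-1)$ of the determinant, showing $V(1)>V(-1)$. I expect this comparison to be the main obstacle: after the substitution $t\mapsto-t$ the two values differ only through the factor $q(1)q(t)^k$ versus $q(-1)q(-t)^k$, but the maximising interior point itself depends on which endpoint is chosen, so a naive pointwise comparison fails and one must genuinely weigh the increasing factor $q$ against the symmetric polynomial factor $(b-t)^2(1-t^2)^{k-1}$. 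This is precisely where the monotonicity (A2) and the log‑concavity coming from (A4) of $q$ must be used decisively.
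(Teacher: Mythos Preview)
Your argument is sound, but it takes a genuinely different route from the paper and is more ambitious than the lemma requires. The paper does not derive the weight by maximising $w(1-w)^k$; instead it writes out the polynomial $p_1$ explicitly for a two-point design with weights $\alpha,1-\alpha$, \emph{assumes} $x_{11}^\ast=1$ as an ansatz, and then imposes the Kiefer--Wolfowitz equality $p_1(1)=(k+1)/q(1)$ at that support point, which collapses to $1/(q(1)\alpha)=(k+1)/q(1)$ and hence $\alpha=1/(k+1)$. A second substitution then checks that equality at $x_{12}^\ast$ holds automatically. The exclusion $x_{12}^\ast\neq-1$ for $k\ge2$ is argued by dimension counting (two points cannot identify $k+1$ parameters), which is equivalent to your observation that $B=0$ forces a singular information matrix.

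Your determinant-based derivation of the weight is equally valid and arguably cleaner; and your Rolle/zero-counting argument that one support point must sit at $\pm1$ is a genuine addition the paper does not supply. Where you spend effort unnecessarily is the final step: the lemma speaks only of ``a \emph{potential} (locally) $D$-optimal marginal design'', and the paper accordingly treats $x_{11}^\ast=1$ purely as an ansatz to be verified later, not as something to be deduced here. So the comparison $V(1)$ versus $V(-1)$ you flag as the main obstacle is not actually part of this lemma's burden---the paper simply does not address it at this stage. If you do want to close that gap, note that with the weight already fixed at $1/(k+1)$ the two candidate determinants differ by the factor $q(b)\,q(t_b)^k(b-t_b)^2(1-t_b^2)^{k-1}$ for $b=\pm1$ with $t_b$ the corresponding optimiser, and the substitution $t\mapsto-t$ together with $q$ increasing reduces the comparison to $q(1)q(t)^k\ge q(-1)q(-t)^k$ at matched polynomial values; (A2) alone handles the $t\ge0$ range, while (A4) (log-concavity of $q$) is what controls the trade-off when $t<0$.
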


\begin{proof}
	With $\xi_1^\ast(x_{11}^\ast)=\alpha>0$ and $\xi_1^\ast(x_{12}^\ast)=1-\alpha$ it is
	\begin{equation*}
		\int q \id^\kappa \D\xi_1^\ast = q(x_{11}^\ast) x_{11}^{\ast\ \kappa} \alpha + q(x_{12}^\ast) x_{12}^{\ast\ \kappa} (1-\alpha) \quad,\quad \kappa\in\{0,1,2\}\ ,
	\end{equation*}
	and
	\begin{equation*}
		\int q\,\D\xi_1 \int q \id^2 \D\xi_1-\left(\int q \id \D\xi_1\right)^2 = q(x_{11}^\ast)q(x_{12}^\ast) (x_{11}^\ast-x_{12}^\ast)^2 \alpha (1-\alpha)\ .
	\end{equation*}
	By using this and the formulas in the proof of Lemma~\ref{lem5} we can determine the polynomial~$p_1$.
	\begin{align*}
		p_1(x_1)=&\frac{ q(x_{11}^\ast) (x_{11}^{\ast}-x_1)^2 \alpha + q(x_{12}^\ast) (x_{12}^{\ast}-x_1)^2 (1-\alpha) }{q(x_{11}^\ast)q(x_{12}^\ast) (x_{11}^\ast-x_{12}^\ast)^2 \alpha (1-\alpha)}\\
		&+ \frac{(k-1) (1-x_1^2)}{q(x_{11}^\ast) (1-x_{11}^{\ast\ 2}) \alpha + q(x_{12}^\ast) (1-x_{12}^{\ast\ 2}) (1-\alpha)}	
	\end{align*}
	Assume, that $x_{11}^\ast=1$. If we can find $\alpha$ and $x_{12}^\ast\in[-1,1)$ so that $\xi_1$ is a feasible (locally) $D$-optimal marginal design, we are done. Here we have to notice that for $k\geq 2$ it is $x_{12}^{\ast}\neq -1$, otherwise a~2-point-design $\xi$ to estimate $k+1$ parameters would exist.\\ If $k=1$, the second summand in $p_1$ as remarked in Lemma~\ref{lem5} is missing. So there is no division by zero in the second summand.\\	
	We look back to the inequality~\eqref{eq:IneqEquivalenceThBiedermann} of the Equivalence Theorem $p_1(x_1)\leq \frac{k+1}{q(x_1)}$.
	In $x_1=x_{11}^\ast\ (=1)$ there should be equality of this inequality:
	\begin{equation*}
		p_1(1)=\frac{1}{q(1) \alpha}\overset{!}{=} \frac{k+1}{q(1)} 	
	\end{equation*}	
	So $\alpha=\frac{1}{k+1}$ and consequently $1-\alpha=\frac{k}{k+1}$. With 	
	\begin{equation*}
		p_1(x_{12}^\ast)= \frac{ 1 }{q(x_{12}^\ast) (1-\alpha)} + \frac{(k-1) }{q(x_{12}^\ast) (1-\alpha)} = \frac{k+1}{q(x_{12}^\ast)}
	\end{equation*}
	there is equality of the inequality~\eqref{eq:IneqEquivalenceThBiedermann} in $x_1=x_{12}^\ast$, too.
\end{proof}

\begin{remark}
	In anticipation of Theorem~\ref{Theorem1} the discretised design will consist of $k+1$ equally weighted support points, where the uniform distribution in the $x_{12}^\ast$-hyperplane is substituted by $k$ points analogously as in Theorem~\ref{theorem:lin1} while the information matrix leaves unchanged. The weights in Lemma~\ref{L8} allow this discretisation.\hfill$\Diamond$ 
\end{remark}

\begin{lemma}\label{L9} 
	In the settings of Lemma~\ref{L8} for $k\geq 2\ :\ x_{12}^\ast\in(-1,1)$ is solution of 
	\begin{equation}
		\frac{q^\prime(x_{12}^\ast)}{q(x_{12}^\ast)}=\frac{2\,(1+kx_{12}^\ast)}{k\,(1-x_{12}^{\ast\ 2})}
	\label{eq:L9}
	\end{equation}
	and for $k=1$\ :\ It is $x_{12}^\ast=-1$ or $x_{12}^\ast\in[-1,1)$ is solution of
	\begin{equation}
		\frac{q^\prime(x_{12}^\ast)}{q(x_{12}^\ast)}=\frac{2}{1-x_{12}^{\ast}}\ .
	\label{eq:L92}
	\end{equation}
	In any case, if additionally (A4) is satisfied, the solution $x_{12}^\ast$ is unique. 
\end{lemma}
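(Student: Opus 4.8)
The plan is to exploit the reduction already achieved: by Lemma~\ref{L6} and Lemma~\ref{L8} the optimal marginal design is supported on the two points $x_{11}^\ast=1$ and $x_{12}^\ast\in[-1,1)$ with the \emph{fixed} weights $\alpha=\frac{1}{k+1}$ and $1-\alpha=\frac{k}{k+1}$, so the only remaining degree of freedom is the location of $x_{12}^\ast$. I would pin it down by maximising $\det\boldsymbol{M}(\xi_1\otimes\overline{\eta})$ over $x_{12}^\ast\in[-1,1)$. Writing $q_1:=q(1)$ and $q_2:=q(x_{12}^\ast)$ and evaluating the block-diagonal matrix~\eqref{eq:infomatrix} on this two-point design, the contribution of $x_{11}^\ast=1$ to the lower block vanishes because $1-(x_{11}^\ast)^2=0$, so only $x_{12}^\ast$ enters the scalar diagonal entry. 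The determinant factorises as the $2\times2$ upper block determinant times the $(k-1)$-st power of that entry, and its $x_{12}^\ast$-dependence reduces, after taking logarithms, to $k\log q(x_{12}^\ast)+2\log(1-x_{12}^\ast)+(k-1)\log\!\big(1-(x_{12}^\ast)^2\big)$ up to a constant independent of $x_{12}^\ast$.

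Differentiating in $x_{12}^\ast$ and setting the result to zero gives $k\,\frac{q'(x_{12}^\ast)}{q(x_{12}^\ast)}=\frac{2}{1-x_{12}^\ast}+\frac{2(k-1)x_{12}^\ast}{1-(x_{12}^\ast)^2}$. Using $\frac{1}{1-x}=\frac{1+x}{1-x^2}$, the two right-hand terms combine to $\frac{2(1+kx_{12}^\ast)}{1-(x_{12}^\ast)^2}$, which is exactly~\eqref{eq:L9}. For $k=1$ the last term is absent (as in Lemma~\ref{lem5}) and the same computation collapses to $\frac{q'(x_{12}^\ast)}{q(x_{12}^\ast)}=\frac{2}{1-x_{12}^\ast}$, i.e.~\eqref{eq:L92}. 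Equivalently, this is the first-order condition $\frac{\D}{\D x_1}\big[q(x_1)p_1(x_1)\big]=0$ for the sensitivity function at the interior support point $x_{12}^\ast$, consistent with the Equivalence Theorem and the equality in~\eqref{eq:IneqEquivalenceThBiedermann} established in Lemma~\ref{L8}.

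It then remains to locate $x_{12}^\ast$ in the admissible range. For $k\ge2$ the derivative of the log-determinant tends to $+\infty$ as $x_{12}^\ast\to-1^+$ (through $\frac{2(k-1)x_{12}^\ast}{1-(x_{12}^\ast)^2}$) and to $-\infty$ as $x_{12}^\ast\to1^-$, so a maximiser lies in the open interval; together with $x_{12}^\ast\neq-1$ from Lemma~\ref{L8} this gives $x_{12}^\ast\in(-1,1)$. For $k=1$ the blow-up at $-1$ is lost, so either the derivative still changes sign inside $(-1,1)$, giving an interior solution of~\eqref{eq:L92}, or it is already non-positive at $-1$, in which case the log-determinant is decreasing on $[-1,1)$ and the maximiser is the boundary point $x_{12}^\ast=-1$; this is precisely the dichotomy stated for $k=1$.

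For uniqueness under (A4) I would write both equations uniformly as $k\,\frac{q'(x)}{q(x)}=\tilde r(x)$ with $\tilde r(x)=\frac{2(1+kx)}{1-x^2}$. By the lemma preceding Example~\ref{ex1} (which transfers (A4) to $q$ using $\beta_1>0$) the left-hand side is non-increasing, while $\tilde r$ is strictly increasing on $(-1,1)$ because $\tilde r'(x)$ has numerator proportional to $k(1+x^2)+2x$, whose minimum over $x$ equals $k-\tfrac1k\ge0$ for $k\ge1$. Hence $k\,\frac{q'}{q}-\tilde r$ is strictly decreasing and vanishes at most once, yielding uniqueness of $x_{12}^\ast$. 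The main obstacle is the bookkeeping in differentiating the determinant and recognising the simplification to $\frac{2(1+kx)}{1-x^2}$, together with a clean treatment of the $k=1$ boundary case; the positivity of $k(1+x^2)+2x$ on $(-1,1)$ is the decisive elementary estimate behind the uniqueness claim.
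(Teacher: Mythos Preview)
Your argument is correct and follows the same route as the paper: evaluate $\log\det\boldsymbol{M}(\xi_1\otimes\overline{\eta})$ on the two-point design of Lemma~\ref{L8}, differentiate in $x_{12}^\ast$, combine the fractions to obtain~\eqref{eq:L9} (respectively~\eqref{eq:L92} for $k=1$), then argue existence from the boundary behaviour and uniqueness from the opposite monotonicity of the two sides under~(A4). The only cosmetic difference is that you verify the strict monotonicity of $\frac{2(1+kx)}{1-x^2}$ by computing its derivative and bounding $k(1+x^2)+2x\ge k-\tfrac1k$, whereas the paper simply observes that $\frac{2(1+kx)}{k(1-x^2)}$ is strictly increasing on $(-1,1)$ with poles at $\pm1$ and range $(-\infty,\infty)$; your $k=1$ dichotomy is phrased a touch loosely (non-positivity of the derivative at $-1$ alone does not force monotonicity on all of $[-1,1)$ without~(A4)), but the conclusion ``interior critical point or boundary $-1$'' is exactly what the lemma asserts and what the paper proves.
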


\begin{proof}
	With $x_{11}^\ast=1$, $\xi_1^\ast(x_{11}^\ast)=\frac{1}{k+1}$, $\xi_1^\ast(x_{12}^\ast)=\frac{k}{k+1}$ and the notation from Lemma~\ref{L8} the determinant of the information matrix for $k\geq 2$ is
	\begin{multline*}
		\det \boldsymbol{M}(\xi_1\otimes\overline{\eta}) = \left(\int q\,\D\xi_1 \int q \id^2 \D\xi_1-\left(\int q \id \D\xi_1\right)^2\right) \cdot\left(\frac{1}{k-1} \int q\,(1-\id^2)\,\D\xi_1\right)^{k-1}
	\end{multline*}
	and
	\begin{align*}
		\log \det \boldsymbol{M}(\xi_1\otimes\overline{\eta}) = &\log\!\left(q(1)q(x_{12}^\ast) (1-x_{12}^\ast)^2 \tfrac{k}{(k+1)^2}\right)\\ &+ (k-1) \left(\log\!\left( q(x_{12}^\ast) (1-x_{12}^{\ast\ 2}) \tfrac{k}{k+1} \right)-\log(k-1)\right)\ .
	\end{align*}	
	To optimise this we have to solve
	\begin{multline*}
		0 \stackrel{!}{=} \frac{\D}{\D x_{12}^\ast}\log \det \boldsymbol{M}(\xi_1\otimes\overline{\eta}) =
				\frac{q^\prime(x_{12}^\ast)}{q(x_{12}^\ast)}+\frac{-2\,(1-x_{12}^\ast)}{(1-x_{12}^\ast)^2}
				 +(k-1) \left(\frac{q^\prime(x_{12}^\ast)}{q(x_{12}^\ast)}+\frac{-2x_{12}^\ast}{1-x_{12}^{\ast\ 2}}\right)  \ .
	\end{multline*}		
	With $1-x_{12}^\ast\neq 0$ and $1+x_{12}^\ast\neq 0$ it simplifies to~\eqref{eq:L9}.\\
	The right-hand side $\frac{2\,(1+kx_{12}^\ast)}{k\,(1-x_{12}^{\ast\ 2})}$ of~\eqref{eq:L9} has poles in $-1$ and $1$ and is strictly increasing on $(-1,1)$ with a range of $(-\infty,\infty)$. Because of (A1) and (A2) $\frac{q^\prime}{q}$ is continuous on $\mathbb{R}$. So there must be an intersection. If $\frac{q^\prime}{q}$ is non-increasing (A4), the intersection is unique.  
	
	\noindent For $k=1$ the third summand of $\log \det \boldsymbol{M}(\xi_1\otimes\overline{\eta})$ disappears as mentioned. So we solve
	\[\frac{\D}{\D x_{12}^\ast}\log \det \boldsymbol{M}(\xi_1\otimes\overline{\eta}) 
				= \frac{q^\prime(x_{12}^\ast)}{q(x_{12}^\ast)}+\frac{-2\,(1-x_{12}^\ast)}{(1-x_{12}^\ast)^2}
				\stackrel{!}{=} 0\ .\]
	With $1-x_{12}^{\ast}\neq 0$ it simplifies to~\eqref{eq:L92}. 
	If~\eqref{eq:L92} has no solution in $[-1,1)$, the maximum of $\log \det \boldsymbol{M}(\xi_1\otimes\overline{\eta})$ is on the boundary. This is equivalent to the maximisation of \mbox{$q(x_{12}^\ast) (1-x_{12}^\ast)^2$} on the boundary. $x_{12}^{\ast}$ must be $-1$.\\
	The right-hand side of~\eqref{eq:L92}, $\frac{2}{1-x_{12}^{\ast}}$, is strictly increasing on $[-1,1)$ with values covering~\mbox{$[1,\infty)$}. If $\frac{q^\prime(-1)}{q(-1)}<1$, there cannot be a solution of~\eqref{eq:L92}. So $x_{12}^{\ast}$ is $-1$, unique. If~$\frac{q^\prime(-1)}{q(-1)}\geq 1$ and (A4) is satisfied, the solution of~\eqref{eq:L92} is unique. 
\end{proof}

Now we can discretise the found (generalised) design.

\begin{theorem} \label{Theorem1}
	There is a (locally) $D$-optimal design for the considered problem satisfying $\beta_1>0, \beta_2=\ldots=\beta_k=0,\ \beta_0\in\mathbb{R}$ that has one support point in $(1,0,\ldots,0)^\top$ and the other $k$~support points are the vertices of an arbitrarily rotated, $(k-1)$-dimensional simplex which is maximally inscribed in the intersection of the $k$-dimensional unit ball and a hyperplane with $x_1=x_{12}^\ast$ in Lemma~\ref{L9}. 
	The design is equally weighted with $\frac{1}{k+1}$.
\end{theorem}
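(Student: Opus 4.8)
The plan is to assemble the optimal design from the pieces already established and then discretise its continuous part without altering the information matrix. By Lemmas~\ref{L2}, \ref{lem3} and~\ref{lem4} it suffices to search among invariant product designs $\xi_1\otimes\overline{\eta}$ concentrated on $\mathbb{S}_{k-1}$, where $\overline{\eta}(x_1,\cdot)$ is the uniform distribution on $\mathbb{S}_{k-2}(\sqrt{1-x_1^2})$. Lemmas~\ref{L6}, \ref{L8} and~\ref{L9} pin down the optimal marginal: it is supported on the two points $x_{11}^\ast=1$ and $x_{12}^\ast$ (with $x_{12}^\ast$ as in Lemma~\ref{L9}), carrying weights $\frac{1}{k+1}$ and $\frac{k}{k+1}$. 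Together with the conditional $\overline{\eta}$ this yields a (locally) $D$-optimal generalised design, its optimality being certified through the sensitivity function of Lemma~\ref{lem5} and the Kiefer--Wolfowitz Equivalence Theorem exactly as in the proof of Lemma~\ref{L8}.

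First I would evaluate the conditional design at the two marginal support points. At $x_1=x_{11}^\ast=1$ the sphere $\mathbb{S}_{k-2}(\sqrt{1-1})$ degenerates to the single point $(1,0,\ldots,0)^\top$, which is the first support point claimed. At $x_1=x_{12}^\ast$ the conditional is the genuine uniform distribution on $\mathbb{S}_{k-2}(\sqrt{1-x_{12}^{\ast\,2}})$, that is, on the surface of a $(k-1)$-dimensional ball of radius $\sqrt{1-x_{12}^{\ast\,2}}$ living in the $(x_2,\ldots,x_k)$-coordinates, which is precisely the intersection of $\mathbb{B}_k$ with the hyperplane $x_1=x_{12}^\ast$.

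The \emph{key} step is the discretisation of this uniform conditional. Applying the linear result (Theorem~\ref{theorem:lin1} together with Lemma~\ref{lem:lin1}) in the $(k-1)$-dimensional coordinate block, rescaled to radius $\sqrt{1-x_{12}^{\ast\,2}}$, the vertices of a regular $(k-1)$-dimensional simplex inscribed in that sphere reproduce exactly the first and second moments of the uniform distribution, namely $\mathrm{E}(x_i)=0$ and $\mathrm{E}(x_i x_j)=\frac{1-x_{12}^{\ast\,2}}{k-1}\,\delta_{ij}$ for $i,j\in\{2,\ldots,k\}$. I would then substitute these $k$ equally weighted vertices for the uniform conditional at $x_1=x_{12}^\ast$ and verify directly in~\eqref{eq:infomatrix} that the information matrix is unchanged: the leading $2\times 2$ block depends only on the marginal $\xi_1$; the off-diagonal blocks vanish by the symmetry argument used in the proof of Lemma~\ref{lem:lin1}; and the diagonal block $\frac{1}{k-1}\int q\,(1-\id^2)\,\D\xi_1\,\mathbb{I}_{k-1}$ is governed solely by the matched second moments. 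Since any rotation of the simplex within the hyperplane preserves these moments (Lemma~\ref{lem3}), the simplex may be placed in an arbitrary orientation, which accounts for the word ``arbitrarily''.

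Finally I would collect the weights. The point $(1,0,\ldots,0)^\top$ keeps its marginal weight $\frac{1}{k+1}$, while the marginal mass $\frac{k}{k+1}$ at $x_1=x_{12}^\ast$ is split equally among the $k$ simplex vertices, giving each of them weight $\frac{1}{k}\cdot\frac{k}{k+1}=\frac{1}{k+1}$; hence all $k+1$ support points carry the common weight $\frac{1}{k+1}$. Because this discrete design shares its information matrix with the $D$-optimal generalised design $\xi_1^\ast\otimes\overline{\eta}$, it is itself $D$-optimal. No genuine obstacle arises: the only substantive point is the moment-matching of the simplex, and this is delivered verbatim by Lemma~\ref{lem:lin1} after the obvious rescaling, so the remaining work is purely the verification that the block structure of~\eqref{eq:infomatrix} is preserved and the bookkeeping of the weights.
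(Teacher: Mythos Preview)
Your proposal is correct and follows essentially the same route as the paper: reduce to the invariant generalised design $\xi_1^\ast\otimes\overline{\eta}$ via the earlier lemmas, observe that the conditional at $x_1=1$ degenerates to the pole, and then discretise the uniform conditional at $x_1=x_{12}^\ast$ by invoking the linear simplex result (Theorem~\ref{theorem:lin1}/Lemma~\ref{lem:lin1}) so that the information matrix is preserved. The only cosmetic difference is that the paper carries out the moment-matching via an explicit scaling matrix $\boldsymbol{Q}$ and writes the information matrix as a sum over simplex vertices, whereas you argue directly through the block structure of~\eqref{eq:infomatrix}; both amount to the same verification.
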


\begin{proof}
	If $k=1$, the (locally) $D$-optimal (generalised) design $\xi^\ast=\xi_1^\ast\otimes\overline{\eta}$ consists of 2~points with weights $\frac{1}{2}$. This is discrete.\\
	Otherwise one explicit point is the pole $\boldsymbol{s}_0=(1,0,\ldots,0)^\top$ with weight $\frac{1}{k+1}$. So only the uniform distribution on the orbit $\{x_{12}^\ast\}\times\mathbb{S}_{k-2}(\sqrt{1-x_{12}^{\ast\ 2}})$ must be discretised.\\
	Being $\boldsymbol{\tilde{x}}=(x_2,\ldots,x_k)$ we consider the information matrix:
	\begin{align*}
		\boldsymbol{M}(\xi^\ast)&=\int_{\mathbb{B}_d} \lambda(\boldsymbol{f}(\boldsymbol{x})^\top\boldsymbol{\beta}) \boldsymbol{f}(\boldsymbol{x}) \boldsymbol{f}(\boldsymbol{x})^\top \xi^\ast(\D\boldsymbol{x})\\
		&=\frac{q(1)}{k+1} \,  \boldsymbol{f}(\boldsymbol{s}_0) \boldsymbol{f}(\boldsymbol{s}_0)^\top +
			\frac{k\, q(x_{12}^\ast)}{k+1} \hspace*{-2.2em}\int\limits_{\mathbb{S}_{k-2}\left(\sqrt{1-x_{12}^{\ast\ 2}}\right)} \hspace*{-2.2em}  \boldsymbol{f}\!\left(\!\!\binom{x_{12}^\ast}{\boldsymbol{\tilde{x}}}\!\!\right) \boldsymbol{f}\!\left(\!\!\binom{x_{12}^\ast}{\boldsymbol{\tilde{x}}}\!\!\right)^{\!\!\top} \overline{\eta}(x_{12}^\ast,\D\boldsymbol{\tilde{x}})
		\intertext{Now we want to scale this orbit to an unit sphere and look only at the last $k-1$ components. Let $\mu$ be the uniform distribution on the sphere $\mathbb{S}_{k-2}$. Let $\boldsymbol{\tilde{f}}$ be the $(k-1)$-dimensional analogue of $\boldsymbol{f}$ and 
					\[\boldsymbol{Q}= \left(\begin{array}{cc}
																		\begin{array}{c} 1\\ x_{12}^\ast \end{array} & \mathbb{O}_{2\times(k-1)}\\
																		\mathbb{O}_{k-1} & \sqrt{1-x_{12}^{\ast\ 2}}\ \mathbb{I}_{k-1}							
																	\end{array} \right)\ .
					\]
					So we have $\boldsymbol{f}((x_{12}^\ast,\boldsymbol{\tilde{x}}\!^\top)\!^\top) = \boldsymbol{Q}\boldsymbol{\tilde{f}}(\boldsymbol{\stackrel{\text{\scalebox{0.6}{$\boldsymbol{\approx}$}}}{x}})$ where $\boldsymbol{\stackrel{\text{\scalebox{0.6}{$\boldsymbol{\approx}$}}}{x}}= \frac{1}{\sqrt{1-x_{12}^{\ast\ 2}}}\,\boldsymbol{\tilde{x}}\in\mathbb{S}_{k-2}$ and}
		\boldsymbol{M}(\xi^\ast)
		&=\frac{q(1)}{k+1} \,  \boldsymbol{f}(\boldsymbol{s}_0) \boldsymbol{f}(\boldsymbol{s}_0)^\top +
			\frac{k\, q(x_{12}^\ast)}{k+1} \, \boldsymbol{Q} \int_{\mathbb{S}_{k-2}} \boldsymbol{\tilde{f}}(\boldsymbol{\stackrel{\text{\scalebox{0.6}{$\boldsymbol{\approx}$}}}{x}}) \boldsymbol{\tilde{f}}(\boldsymbol{\stackrel{\text{\scalebox{0.6}{$\boldsymbol{\approx}$}}}{x}})^\top \mu(\D\!\boldsymbol{\stackrel{\text{\scalebox{0.6}{$\boldsymbol{\approx}$}}}{x}})\ \boldsymbol{Q}^\top
		\intertext{Let $\boldsymbol{\tilde{s}}_1,\ldots,\boldsymbol{\tilde{s}}_{k}$ be the vertices of the (arbitrarily rotated) simplex in Theorem~\ref{theorem:lin1} in the $(k-1)$-dimensional issue and $\boldsymbol{s}_\kappa=\sqrt{1-x_{12}^{\ast\ 2}}\,\boldsymbol{\tilde{s}}_{\kappa}$, $\kappa\in\{1,\ldots,k\}$, the scaled vertices. Using Lemma~\ref{lem:lin1} and Theorem~\ref{theorem:lin1} we get finally}
		\boldsymbol{M}(\xi^\ast)
		&=\frac{1}{k+1} \, q(1) \boldsymbol{f}(\boldsymbol{s}_0) \boldsymbol{f}(\boldsymbol{s}_0)^\top +
			\frac{k}{k+1} \, q(x_{12}^\ast)\, \boldsymbol{Q}\, \frac{1}{k} \sum\limits_{\kappa=1}^k \boldsymbol{\tilde{f}}\!\left(\boldsymbol{\tilde{s}}_\kappa\right) \boldsymbol{\tilde{f}}\!\left(\boldsymbol{\tilde{s}}_\kappa\right)^\top \ \boldsymbol{Q}^\top\\
		&=\frac{1}{k+1} \left[ q(1) \boldsymbol{f}(\boldsymbol{s}_0) \boldsymbol{f}(\boldsymbol{s}_0)^\top +
														\sum\limits_{\kappa=1}^k q(x_{12}^\ast)\, \boldsymbol{f}\!\left(\!\binom{x_{12}^\ast}{\boldsymbol{s}_\kappa}\!\right) \boldsymbol{f}\!\left(\!\binom{x_{12}^\ast}{\boldsymbol{s}_\kappa}\!\right)^\top\right]\ .
	\end{align*}	
	This means that the constructed discrete design has the same information matrix as the optimal (generalised) design. So it is $D$-optimal, too.  
\end{proof}

Now we want to have a short look at some remarks.

\begin{remark} \label{rem1}
	If $\beta_1=0$ in~\eqref{eq:speccase}, then in Lemma~\ref{lem3} the $D$-criterion is invariant with respect to arbitrary rotations of $x_1,x_2,\ldots,x_k$. So we get as in Lemma~\ref{lem4} that the invariant design~$\xi$ is a design with an uniform distribution on the entire sphere $\mathbb{S}_{k-1}$. $\lambda(\boldsymbol{f}(\boldsymbol{x})^\top\boldsymbol{\beta})$ is constant and the information matrix does not depend on it. As in Theorem~\ref{theorem:lin1} and Lemma~\ref{lem:lin1} the discretised design consists of the equally weighted vertices of a regular simplex inscribed in that sphere. The orientation is arbitrary.\hfill$\Diamond$
\end{remark}

\begin{remark} \label{rem2}
	If we want to determine an optimal design for a general $\boldsymbol{\beta}=(\beta_0,\beta_1,\ldots,\beta_k)$, which does not satisfy~\eqref{eq:speccase}, we have to rotate the design and parameter space, find a (locally) $D$-optimal design and rotate it back. By using Theorem~\ref{Theorem1} and an applicable representation of a $(k-1)$-dimensional regular simplex the equally weighted support points are $\frac{\boldsymbol{\tilde{\beta}}}{|\boldsymbol{\tilde{\beta}}|}$, which is the point of the maximum value of the intensity function all over the design region $\mathbb{B}_k$, and the columns of
\[\frac{\boldsymbol{\tilde{\beta}}}{|\boldsymbol{\tilde{\beta}}|} \mathds{1}_k^\top \left(x_{12}^\ast+ \frac{\sqrt{1-x_{12}^{\ast\ 2}}}{\sqrt{k-1}}\right) + \boldsymbol{H}\ \frac{\sqrt{k}}{\sqrt{k-1}}\ \sqrt{1-x_{12}^{\ast\ 2}}\]
where $\boldsymbol{\tilde{\beta}}=(\beta_1,\ldots,\beta_k)$. The second summand contains beside some scaling factor the Householder matrix
\[\boldsymbol{H}=\mathbb{I}_k-\frac{2}{\boldsymbol{v}^\top\boldsymbol{v}}\boldsymbol{v}\boldsymbol{v}^\top \text{\quad and \quad} 
		\boldsymbol{v}=\frac{\boldsymbol{\tilde{\beta}}}{|\boldsymbol{\tilde{\beta}}|}+\frac{1}{\sqrt{k}}(1,\ldots,1)^\top\]
which tilts the standard sub-simplex $(1,0,\ldots,0)^\top,$ $(0,1,\ldots,0)^\top,$ $\ldots,$ $(0,0,\ldots,1)^\top$ in the correct angle (hyperplane). The first summand shifts this in the right distance to $\frac{\boldsymbol{\tilde{\beta}}}{|\boldsymbol{\tilde{\beta}}|}$.\hfill$\Diamond$
\end{remark}

\setcounter{equation}{0} 
\section{Examples}
\label{sec3}
 As shown in Example~\ref{ex1} the Poisson regression and the negative binomial regression satisfy the conditions (A1) - (A4).
  In case of the Poisson regression we can find an explicit solution for $x_{12}^\ast$. Using Lemma~\ref{L9} we have to solve
	\begin{align}
		\beta_1&=\frac{2\,(1+k x_{12}^\ast)}{k\,(1-x_{12}^{\ast\ 2})} \qquad\text{for}\qquad k\geq 2 \label{eq:PoiReg1}
		\vspace*{-2ex}
		\intertext{ and get \vspace*{-2ex}}  
		x_{12}^\ast&=\begin{cases}
										\frac{-1+\sqrt{1-\frac{2}{k}\beta_1+\beta_1^2}}{\beta_1} & \text{for $\beta_1>0$}\\
										-\frac{1}{k}& \text{for $\beta_1=0$\ .}
									\end{cases} \notag
	\end{align}
	Note that this is right-continuous in $\beta_1=0$. The case $\beta_1=0$ is in line with Remark~\ref{rem1}. 
	For $k=1$, which means Poisson regression on the line segment $[-1,1]$, it is 
	\begin{equation*}
		x_{12}^\ast=\begin{cases}-1& \text{for}\ \beta_1\in[0,1]\\ 1-\tfrac{2}{\beta_1}& \text{for}\ \beta_1>1\end{cases}	
	\end{equation*}
	which is the same solution as in \cite{Russell:2009} or in \cite{Schmidt:2017}.
		
	\noindent For $k=2, 3$ or 4 and higher dimensions the graph of $x_{12}^\ast$ in dependence on $\beta_1\geq 0$ for fixed $k$ is strictly monotonically increasing and reaches 1 for $\beta_1\to\infty$. These graphs approaches monotonically for $k\to\infty$ from below the limit 
	\[ _\infty x_{12}^\ast =\frac{-1+\sqrt{1+\beta_1^2}}{\beta_1}\ \qquad\text{for}\qquad \beta_1>0\]
	and $_\infty x_{12}^\ast=0$ for $\beta_1=0$. For an illustration of these issues see Figure~\ref{fig:plot_beta_x12_poisson_1b}.
	
	\begin{figure}[bt!]
		\centering
			\includegraphics[width=0.45\textwidth]{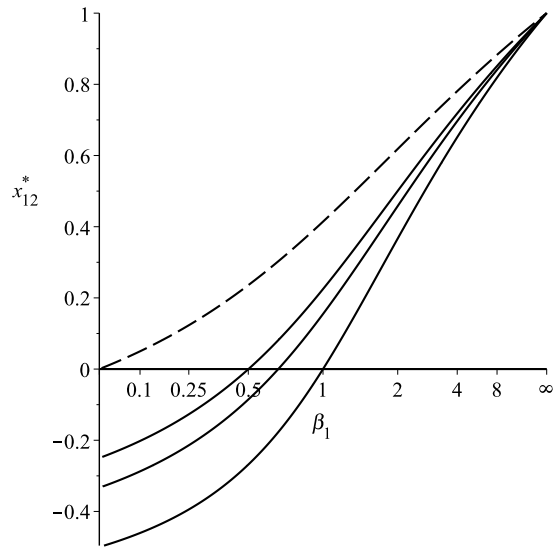}
		\caption{Dependence of $x_{12}^\ast$ on $\beta_1$ for the Poisson regression. The graphs for $k=2$, $k=3$ and $k=4$ (from below) and the graph for $k\to\infty$ (dashed) are plotted. The $\beta_1$-axis is transformed by $\frac{\beta_1}{1+\beta_1}$.}
		\label{fig:plot_beta_x12_poisson_1b}
	\end{figure}
	
	\noindent Using Remark~\ref{rem2} for the Poisson regression on the 3-dimensional unit ball with initial parameter $\boldsymbol{\beta}=(\cdot,1,2,2)^\top$ we get $x_{12}^\ast\approx0.6095$ and a (locally) $D$-optimal design with 4 support points: $(\frac{1}{3},\frac{2}{3},\frac{2}{3})$, $(0.9506,0.2195,0.2195)$, $(-0.1706,0.9852,0.0143)$ and $(-0.1706,0.0143,0.9852)$, see Figure~\ref{fig:PoissonReg_1-2-2_1a}. As implicitly written in the given parameter vector $\boldsymbol{\beta}$ the optimal design does not depend on the first component $\beta_0$.\bigskip
	
	\begin{figure}[bt!]
		\centering
			\includegraphics[width=0.45\textwidth]{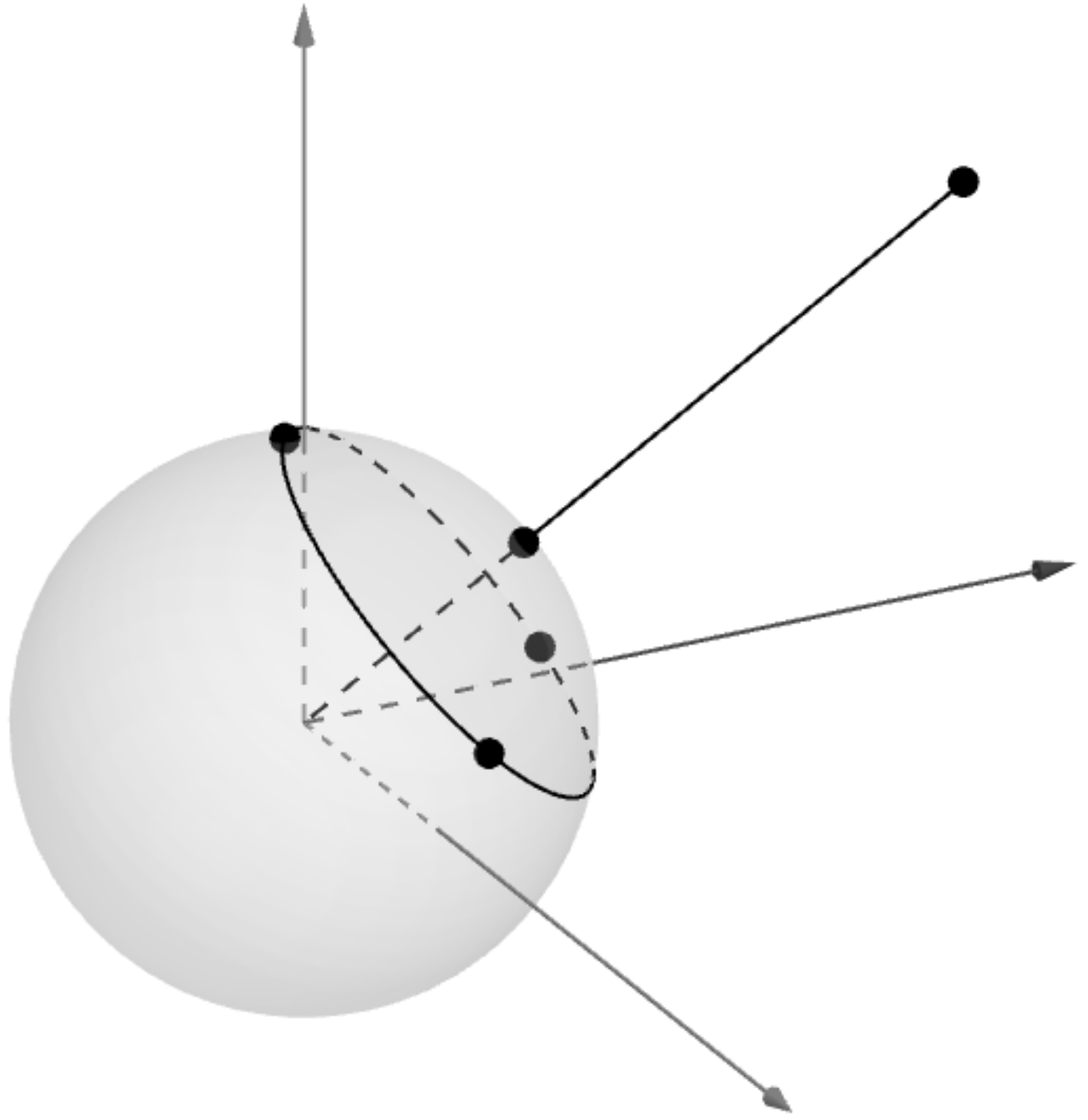}
		\caption{Poisson regression on the 3-dimensional unit ball with initial $\boldsymbol{\beta}=(\cdot,1,2,2)$}
		\label{fig:PoissonReg_1-2-2_1a}
	\end{figure}
	
\newpage	
	In the case of negative binomial regression the solution of~\eqref{eq:L9}, in fact
	\begin{equation}
		\frac{\beta_1}{1+a\exp(\beta_0+\beta_1 x_{12}^\ast)}=\frac{2\,(1+k x_{12}^\ast)}{k\,(1-x_{12}^{\ast\ 2})} \qquad\text{for}\qquad k\geq 2,
	\label{eq:negBin1}
	\end{equation}
	cannot be expressed explicitly. 
	We plotted the graph only numerically, see Figure~\ref{fig:plot_beta_x12_negbinom_1b}.
	But by the Implicit Function Theorem we know that the $\beta_1$-$x_{12}^\ast$-graph is continuous and differentiable on $(0,\infty)$.
	Additionally there is only one root in $\beta_1=\frac{2}{k} \left(1+a\exp(\beta_0)\right)$ and the derivative of this graph on this root is positive. For $\beta_1\to 0$ the graph approaches $-\frac{1}{k}$. So $x_{12}^\ast$ to the right of the root must be positive.\\	
	The other way round the $x_{12}^\ast$-$\beta_1$-graph can be expressed by the standard Lambert-$\operatorname{W}$ function, the inverse function of $x\mapsto x\exp(x)$:
	\begin{equation*}
		-\frac{1}{x_{12}^\ast} 
			\operatorname{W}\!\left(
					- \frac{2\,a\,x_{12}^\ast (1+k x_{12}^\ast)}{k\,(1- x_{12}^{\ast\ 2})}
					\exp\!\left(\beta_0+\frac{2\,x_{12}^\ast(1+k x_{12}^{\ast\ 2})}{k\,(1-x_{12}^{\ast\ 2}) }\right)
			\right)
		+\frac{2\,(1+k x_{12}^\ast)}{k\,(1-x_{12}^{\ast\ 2})}
	\end{equation*}
	Here one can see the reason of the behaviour of the actual $\beta_1$-$x_{12}^\ast$-graph. The main branch of the Lambert-$\operatorname{W}$ function induces the shape to the left and the lower branch to the right of the maximum of the curve.\\
	In~\eqref{eq:negBin1} 
	it is obvious, that for $a\to 0$ we get the same graph equation~\eqref{eq:PoiReg1} on $\beta_1\in[0,\infty)$ as in Poisson regression. For $a\to\infty$ we get on $\beta_1\in[0,\infty)$ the constant $x_{12}^\ast=-\frac{1}{k}$.

	\bigskip
	

	For proportional hazards models \cite{Schmidt:2017} as well as \cite{Konstantinou:2014} considered particularly three cases of censoring which satisfy our four conditions (A1) - (A4). First of all type I censoring for a fixed censoring time $c$ has 
	\[q_\mathrm{CI}(x_1)=1-\exp(-c \exp(\beta_0+\beta_1 x_1))\ .\]
	Then random censoring with on the intervall $[0,c]$ uniformly distributed censoring times has 
	\[q_\mathrm{CU}(x_1)=1-\frac{\exp(-c \exp(\beta_0+\beta_1 x_1))-1}{c \exp(\beta_0+\beta_1 x_1)}\ .\]
	For an illustration of both issues see Figure~\ref{fig:plot_beta_x12_censoring}.\\
	And finally random censoring with exponentially-$\lambda$-distributed censoring times has
	\[q_\mathrm{CE}(x_1)=\frac{\exp(\beta_0+\beta_1 x_1)}{\exp(\beta_0+\beta_1 x_1)+\lambda}\]
	which is similar to the negative binomial regression with $a=\frac{1}{\lambda}$.
	
 
\newpage	
\begin{figure}[tb!]
	\centering
		\includegraphics[width=0.45\textwidth]{}
	\caption{Dependence of $x_{12}^\ast$ on $\beta_1$ for the negative binomial regression. The graphs for $k=2$, $k=3$ and $k=4$ (from below) and the graph for $k\to\infty$ (dashed) are plotted with $\beta_0=0$ and $a=2$.\newline The $\beta_1$-axis is transformed by $\frac{\beta_1}{1+\beta_1}$.}
	\label{fig:plot_beta_x12_negbinom_1b}
\end{figure}	

	\begin{figure} [htb!]
    \hspace*{0.025\textwidth}
		\subfigure{\includegraphics[width=0.45\textwidth]{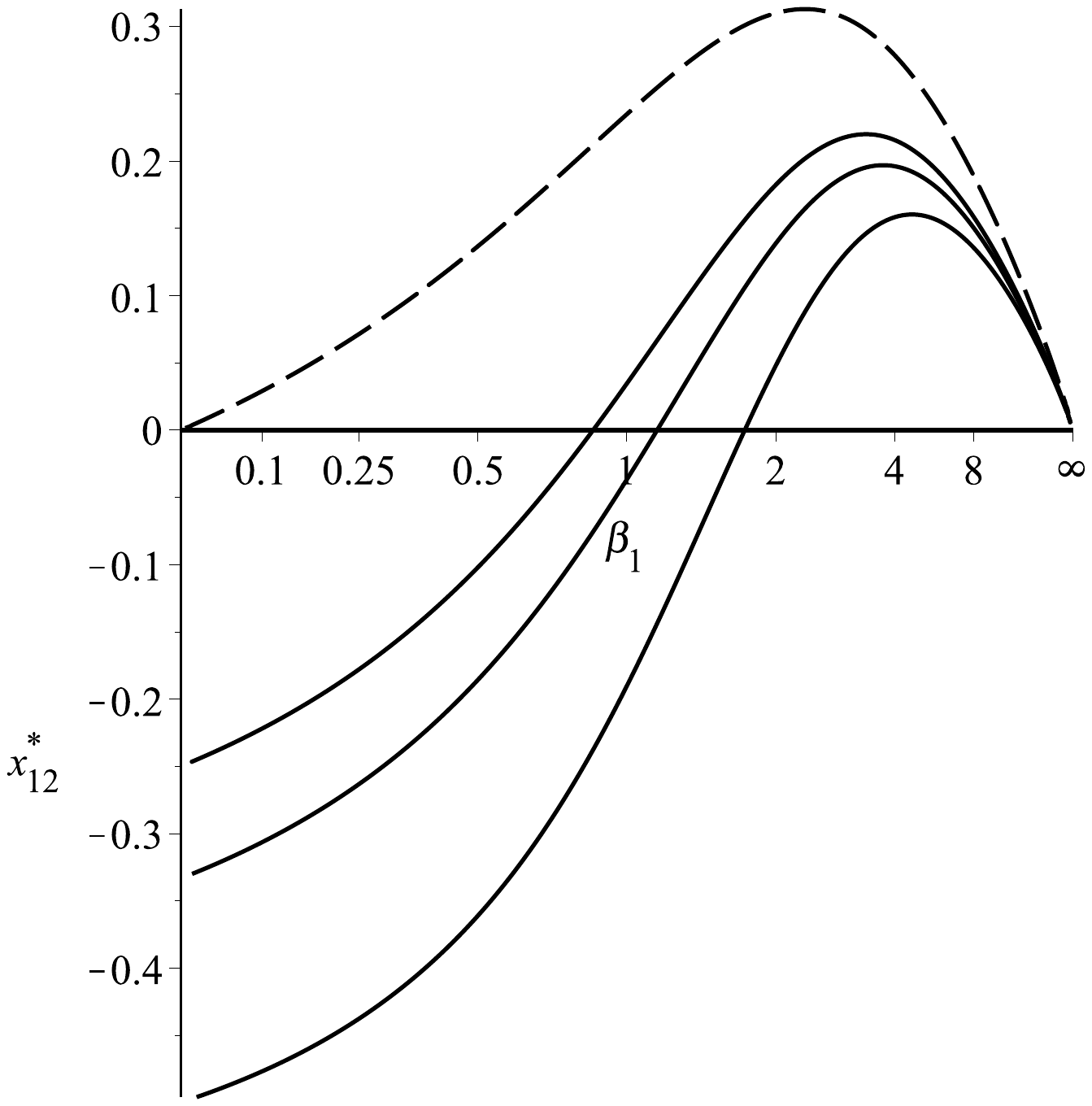}}
    \hspace*{0.025\textwidth}
    \subfigure{\includegraphics[width=0.45\textwidth]{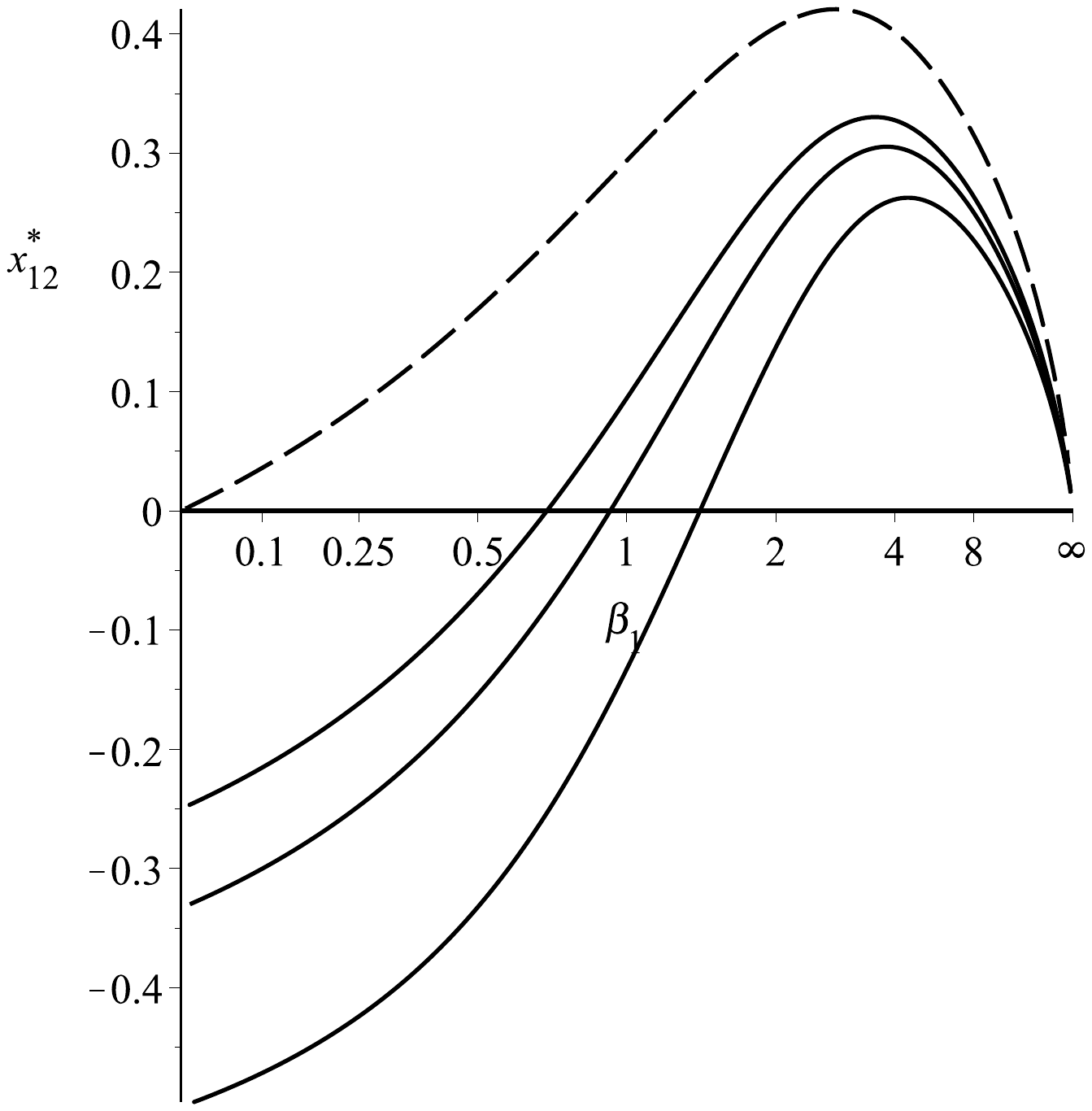}}
\caption{Dependence of $x_{12}^\ast$ on $\beta_1$ for the type I censoring with censoring time $c=1$ (left plot) and for the random censoring with $[0,1]$-uniformly distributed censoring times (right plot). The graphs for $k=2$, $k=3$ and $k=4$ (from below) and the graph for $k\to\infty$ (dashed) are plotted with $\beta_0=0$. The $\beta_1$-axis is transformed by $\frac{\beta_1}{1+\beta_1}$.}
	\label{fig:plot_beta_x12_censoring}
\end{figure}

\newpage

\setcounter{equation}{0} 
\section{Summary}
\label{sec4}

The developed (locally) $D$-optimal exact designs are applicable to a large class of non-linear multiple regression problems, such as Poisson or negative binomial regression and censoring models. The design region is the $k$-dimensional ball. By our opinion there is a big amount of practical real-life problems for example in engineering or physics where the phenomenon has to be examined and evaluated only in a ball-shaped neighbourhood of its occurrence.\\
Not only the (unit) ball is a possible design region but also other regions which can be linearly transformed by scaling and rotations into a ball can be treated by using the equivariance results in \cite{Radloff:2016}. Any $k$-dimensional ball with arbitrary radius or an ellipsoid are such solids.\\
The results are only valid for multiple regression. If there are interaction terms or quadratic terms, support points inside the design region or more complicated designs will occur.\\
In this paper we focused only on (local) $D$-optimality. There is a dependence on the initial parameter vector. Other optimality criteria or especially (more) robust criteria, like maximin, and their designs on the $k$-dimensional unit ball should be the object of future research.


\bibhang=1.7pc
\bibsep=2pt
\fontsize{9}{14pt plus.8pt minus .6pt}\selectfont
\renewcommand\bibname{\large \bf References}
\expandafter\ifx\csname
natexlab\endcsname\relax\def\natexlab#1{#1}\fi
\expandafter\ifx\csname url\endcsname\relax
  \def\url#1{\texttt{#1}}\fi
\expandafter\ifx\csname urlprefix\endcsname\relax\def\urlprefix{URL}\fi

\bibliographystyle{chicago}      
\bibliography{BibTeX_Radloff_1}   

\end{document}